\documentclass[12pt]{article}
\UseRawInputEncoding
\usepackage{amssymb}
\usepackage{amsfonts}
\usepackage{mathtools}
\usepackage{bm}
\usepackage[all]{xy}
\usepackage{graphicx}
\usepackage{verbatim}
\usepackage{wrapfig}
\usepackage{makeidx}
\usepackage{amscd}
\usepackage{url}
\usepackage{comment}
\usepackage{enumerate}
\usepackage{here}
\usepackage{latexsym}
\usepackage{array}
\usepackage{booktabs}
\usepackage{multirow}
\usepackage{mathrsfs}
\usepackage{geometry}
\usepackage{fancyhdr}

\renewcommand{\title}[1]{

\begin{center} \Large \bf #1 \end{center}
}

\renewcommand{\author}[2]{
 \begin{center} #1  \vspace{3mm} \\
  #2 \\
 \end{center}
\addvspace{\baselineskip}
}

\usepackage{amssymb}
\usepackage{amsmath}

\usepackage{amsthm}
\newtheorem{theorem}{Theorem}[section]
\newtheorem{proposition}[theorem]{Proposition}

\newtheorem{lemma}[theorem]{Lemma}

\theoremstyle{definition}
\newtheorem{definition}[theorem]{Definition}

\theoremstyle{remark}
\newtheorem*{rem}{Remark}

\makeatletter
\@addtoreset{equation}{section}

\makeatother

\begin{document}

\baselineskip 5mm
\title{Quantization of Algebraic Varieties Defined by Casimir Polynomials via Matrix Regularization:\\
Fuzzy $S^7$ and Beyond}
\author{ Akifumi Sako}{
Department of Mathematics, Faculty of Science Division II, \\
 Tokyo University of Science,\\
 1-3 Kagurazaka, Shinjuku-ku, Tokyo, 162-8601, Japan
}
\noindent
%

\abstract{
We study the quantization of algebraic varieties defined by equations involving Casimir polynomials of compact semisimple Lie algebras. The Casimir polynomials belong to the Poisson center of the corresponding Lie-Poisson algebra. For this purpose, we employ a recently developed matrix regularization of Lie-Poisson algebras. In particular, using its formulation based on reducible representations, we construct quantizations of these algebraic varieties through their decomposition into coadjoint orbits, including singular orbits. As a concrete example, we present the construction of fuzzy $S^7$ in detail.
}

\section{ Introduction }

\subsection{Background}

This paper concerns the quantization of algebraic varieties defined by algebraic equations, with particular emphasis on matrix regularization. Since matrix regularization is closely related to physical theories known as matrix models, we begin with a brief introduction to matrix models.

In attempts to formulate string theory, a candidate theory of quantum gravity, in a nonperturbative manner, matrix models such as the type IIB matrix model, also known as the IKKT (Ishibashi--Kawai--Kitazawa--Tsuchiya) matrix model, and the BFSS (Banks--Fischler--Shenker--Susskind) matrix model naturally arise. Matrix algebras are used to describe these theories \cite{BFSS,IKKT3}. See also \cite{IKKT2,fuzzya} and the references therein. It is therefore important to understand how matrix algebras encode geometries such as the spacetime in which we live.

One approach to investigating the spacetime emerging from matrix models is numerical simulation. The emergence of a $(3+1)$-dimensional spacetime was reported in \cite{Kim:2011cr,Anagnostopoulos:2026qvz}, and many other results have subsequently been obtained \cite{Anagnostopoulos:2022dak}. Another approach is to construct geometry from the viewpoint of noncommutative geometry. An early form of this idea can be found in \cite{Gonzalez-Arroyo:1982hyq}. It was subsequently developed in \cite{IKKT2,Aoki:1999vr} and in studies of gauge theories on noncommutative spaces \cite{Ambjorn:1999ts,Ambjorn:2000nb,Ambjorn:2000cs}. Several reviews of these developments are available in \cite{fuzzya,Ydri:2017ncg}. The matrix regularization considered in this paper also treats matrix models from the viewpoint of noncommutative geometry.
Further new developments in these matrix models can be found, for example, in \cite{Chou:2025moy,Hattori:2026hrh,Asano:2026mal} and the references therein. 

The Lie algebras considered in this paper arise as classical solutions of the matrix models studied in \cite{Sako:2022pid,Gohara_Sako}. The model includes, as a special case, the bosonic sector of the mass-deformed IKKT matrix model. 
It is known that there are various classical solutions when the model is deformed with a mass term.
(See for example \cite{DiscreteMiniSurface,Kim:2011ts,Kim:2012mw,Sperling:2019xar}.)
A mass term in the IKKT matrix model was also discussed in connection with a new regularization that preserves Lorentz symmetry in \cite{Asano:2024def}. It has furthermore been suggested that an effective mass term may be dynamically generated \cite{Laliberte:2024iof}. 
The results of the present paper provide insight into the classical geometries and classical dynamics associated with matrix configurations that appear as classical solutions of the matrix models.\\
\bigskip

We next briefly review the history of matrix regularization and of the fuzzy spaces obtained from it. Matrix regularization is a sequence of quantizations that map an algebra of functions on a symplectic manifold to matrix algebras. This mapping provides an approximate correspondence between Poisson brackets and commutators.
The history of fuzzy spaces begins with the proposal of the fuzzy sphere \cite{matrix1,fuzzy1}. The fuzzy sphere provides a map from a set of functions on $S^2$ to the endomorphism algebra of an $N$-dimensional vector space. This matrix algebra is generated by an irreducible representation of ${su}(2)$. Further details can be found in \cite{matrix1,fuzzy1,fuzzyb,fuzzyc} and the references therein.
Fuzzy Riemann surfaces have also been constructed
\cite{hoppe1,hoppe2,fuzzyb,arnlind2,arnlind3,Schneiderbauer,klimek}.
The quantization of K\"{a}hler manifolds using Toeplitz operators
\cite{berezin1,klimek} can also be understood as a matrix regularization.
Fuzzy spaces have been studied using quasi-coherent states, and it has
been shown that their classical structures can be extracted without
taking the limit $N\to\infty$ \cite{Schneiderbauer,steinacker2}.
More recently, matrix regularizations based on Howe duality have also been
developed \cite{Hasebe:2026meg}.

The Lie-Poisson algebras considered in this paper are restricted to those
associated with compact groups. However, Lie algebras associated with
noncompact groups can also be considered, and several corresponding fuzzy
spaces are known
\cite{Hasebe:2012mz,Jurman:2013ota,Jurman:2017kkp,
Sperling:2018xrm,Sperling:2019xar}.
Although it is not possible to list all known examples here, surveys and
summaries can be found, for example, in
\cite{Bal:2004ai,Nair:1998bp,Grosse:1999ci,ABIY,
Balachandran:2001dd,Azuma_Bal_Nagao,Grosse:2004wm,
Steinacker_text,Rieffel:2021ykh}.

It is important to note that the commutative limit of a matrix
regularization does not necessarily coincide with the manifold from which
the quantization was originally constructed. When a classical geometry,
or equivalently a Poisson algebra, is extracted from a matrix algebra, the
resulting Poisson algebra depends on the choice of the classical
commutative limit
\cite{berezin1,bordemannA,Chu:2001xi}.
See also
\cite{deWitHoppeNicolai,arnlind,matrix1,fuzzy1,MadoreText}.
The problem of determining the appropriate classical commutative limit is
sometimes referred to as the inverse problem. Various approaches to
extracting geometric properties from matrix algebras have been studied
\cite{shimada,berenstein,Schneiderbauer,ishiki1,ishiki2,asakawa}.

\bigskip

In this paper, we study the matrix regularization constructed in
\cite{Sako:2022pid,Gohara_Sako} as a quantization of algebraic varieties
invariant under the action of semisimple Lie groups. The action of the Lie
algebra on an algebraic variety is described by a Lie-Poisson algebra,
and the variety is defined using one of its Casimir polynomials.

The Lie-Poisson algebra used in this paper is defined as follows.
\begin{definition}
Let $x=(x_1,x_2,\ldots,x_d)$ be a set of commuting variables, and suppose
that
$
\bigl(\mathbb{C}[x],\cdot,\{\ ,\ \}\bigr)
$
is a Poisson algebra. If the Poisson bracket is linear on the generators,
namely, if there exist structure constants $f_{ij}^{k}$ such that
\begin{align}\label{poisson_bra}
\{x_i,x_j\}=f_{ij}^{k}x_k,
\end{align}
then
$
\bigl(\mathbb{C}[x],\cdot,\{\ ,\ \}\bigr)
$
is called a Lie-Poisson algebra.
\end{definition}
As in Eq.~(\ref{poisson_bra}), throughout this paper we adopt the Einstein summation convention for all indices. When ambiguity may arise, we also use the explicit $\sum$ symbol.
\\

Lie-Poisson structures are discussed in
\cite{S_Lie,Weinstein1983}.
An important theory was developed through the geometric quantization of coadjoint orbits, 
establishing a correspondence between suitable coadjoint orbits of a Lie-Poisson algebra and irreducible representations of the corresponding Lie algebra
\cite{{Kirillov2004},{Kostant1970},{Woodhouse1991},{Souriau1997}}. 
This theory shares the viewpoint underlying the quantization considered in this paper.
The quantization of Lie-Poisson algebras was studied by Rieffel within
the framework of deformation quantization \cite{rieffel_90}.
As a related development, the deformation quantization of polynomial
Poisson algebras by means of universal enveloping algebras, generalizing
the corresponding construction for Lie-Poisson structures, was discussed
in \cite{Penkava_Vanhaecke}.

In \cite{Sako:2022pid,sako2024}, the author introduced a category that encompasses many quantization schemes and studied matrix regularization of
Lie-Poisson algebras as an example of how the inverse problem of
quantization can be treated in terms of categorical limits. 
(A categorical formulation of the classical limit of matrix regularization
is also discussed in \cite{Gohara:2019kkd,Jumpei:2020ngc}, where a different category is employed.)
Matrix regularization was introduced for algebraic varieties defined by polynomials in the Poisson center, 
and this theory was further developed in \cite{Gohara_Sako}.
The classical limit, however, was not discussed in \cite{Gohara_Sako}. 
In the present paper, we give a construction for algebraic varieties defined by
polynomials invariant under compact semisimple Lie algebras such that the
original algebraic variety is recovered in the classical limit.

\subsection{Purpose of this paper}

We study the quantization of spaces whose symmetries are described by a Lie-Poisson algebra. We focus in particular on algebraic varieties defined by polynomials invariant with respect to the Lie-Poisson bracket associated with compact semisimple Lie groups.
As a quantization of such spaces, we employ matrix regularization of Lie-Poisson algebras, which has recently been developed. In particular, using a matrix regularization based on reducible representations, we quantize algebraic varieties by means of their decomposition into coadjoint orbits.
\\

Recently, progress has been made by J. Gohara and the author on matrix regularization of Lie-Poisson algebras\cite{Gohara_Sako}. A matrix regularization, or weak matrix regularization, was formulated for algebraic varieties defined by fixing some of the Casimir polynomials,
$C^{k_i}(x)-\lambda_{k_i}=0$,
rather than by fixing all independent Casimir polynomials simultaneously.
In this formulation, matrix regularization was realized by mapping the remainder classes, computed with respect to the Gr\"obner basis of the ideal generated by $C^{k_i}(x)-\lambda_{k_i}$, to matrices through a representation.
However, other types of quantization, such as geometric quantization using irreducible representations, can be understood as the quantization of coadjoint orbits\cite{{Kirillov2004},{Kostant1970},{Woodhouse1991},{Souriau1997}}.
In fact, in the matrix regularization described above, the existence of elements in the kernel of the quantization originating from invariant polynomials results in a lack of faithfulness, even when irreducible representations are employed.
So it was argued that reducible representations are necessary in order to make the quantization more faithful. Although ideas for the quantization of ${\mathbb R}^3$ and $S^7$ were described, a detailed discussion was not provided in \cite{Gohara_Sako}.
\\

For a fixed element $\zeta \in \mathfrak{g}^*$, the coadjoint orbit through $\zeta$ is defined by
\begin{align}
O(\zeta) := \{ Ad^*_{g^{-1}}(\zeta) ~ | ~ g \in G \}.
\end{align}
If $G$ is compact, then $O(\zeta)$ is a closed embedded submanifold. 
In general, a Lie algebra equipped with the adjoint action of a Lie group is decomposed into the disjoint union of its adjoint orbits. For example, under the identification $su(2)\simeq {\mathbb R}^3$, the adjoint orbits are the spheres $S^2$ of fixed radius together with the origin. Thus ${\mathbb R}^3$ is exhausted by these adjoint orbits. 
This observation can be refined by using invariant polynomials.



Let $G$ be a compact connected semisimple Lie group with Lie algebra $\mathfrak{g}$. 
Using an invariant inner product, we identify $\mathfrak{g}$ with $\mathfrak{g}^*$. 
Let $\mathfrak{h}$ be a Cartan subalgebra of $\mathfrak{g}$, 
and let $\mathfrak{h}_+$ be a fixed closed dominant Weyl chamber. 
Every adjoint orbit intersects $\mathfrak{h}$ in a single Weyl-group orbit,
and hence intersects $\mathfrak{h}_+$ in exactly one point.
Moreover, the values of a set of homogeneous algebraically independent generators of the invariant
polynomial algebra determine each adjoint orbit uniquely  \cite[Theorem~4.37(ii)]{Alexandrino_Bettiol}.

Therefore, a common level set obtained by fixing all independent Casimir
polynomials is a single adjoint orbit, or equivalently a single
coadjoint orbit. If only some of the Casimir polynomials are fixed, the
resulting level set is decomposed into coadjoint orbits distinguished by
the values of the remaining Casimir polynomials.

For instance, in the case of $su(3)$, the level set defined by fixing the quadratic Casimir is identified, up to normalization, with $S^7\subset {\mathbb R}^8$. 
Since only the quadratic Casimir is fixed, this level set is not a single fiber of the full adjoint quotient map. 
Rather, it is decomposed into adjoint, or equivalently coadjoint, orbits, which are further distinguished by the values of the remaining invariant polynomials.

Based on these facts, in this paper we study the quantization of algebraic varieties defined by invariant polynomials associated with compact semisimple Lie groups. Such an algebraic variety is a subvariety of the Lie algebra and is decomposed into coadjoint orbits, which are submanifolds of the variety. In the matrix regularization considered here, we choose a family of coadjoint orbits that becomes dense in the algebraic variety in a suitable limit. 
Since irreducible representations provide more faithful quantizations of individual coadjoint orbits, the quantization of their disjoint union requires the use of reducible representations (see Appendix~\ref{App_red_rep}).
We then construct a reducible representation from the representations corresponding to these coadjoint orbits, and use it to define a matrix regularization.\\

Thus, in this paper, we construct matrix regularizations of algebraic varieties defined by polynomials in the Poisson center of the Lie-Poisson algebra associated with a compact semisimple Lie algebra.
As a concrete example, we study in detail the construction of fuzzy $S^7$.

The remainder of this paper is organized as follows.
In Section \ref{sect2}, we review the preliminaries and notation we use in this paper.
In Section \ref{sect3}, we show that the classical limit of a matrix regularization based on an appropriately chosen sequence of irreducible representations is a coadjoint orbit. More precisely, for a weak matrix regularization of an algebraic variety defined by fixing the value of a single Casimir invariant, we show that, by appropriately choosing the sequence of irreducible representations, the classical limit can be any coadjoint orbit contained in the algebraic variety. 
In Section \ref{sect4}, as a simple example illustrating the strategy of this paper, we construct fuzzy ${\mathbb R}^3$ as a weak matrix regularization densely filled with fuzzy spheres. In Section \ref{sect5}, as a more complicated example, we explicitly construct fuzzy $S^7$.
In Section \ref{sect6}, we show that, for any algebraic variety defined by fixing one Casimir polynomial of a compact semisimple Lie algebra, one can construct a weak matrix regularization based on reducible representations whose classical limit is the algebraic variety itself.
Section \ref{sect7} presents a summary and discusses directions for future work.


\section{Review and notation}\label{sect2}
Matrix regularization for Lie-Poisson algebras was introduced in \cite{Sako:2022pid}. Furthermore, in \cite{Gohara_Sako}, the quantization of algebraic varieties defined by invariant polynomials as algebraic equations was realized. Since we use these results in this paper, we review them in this section.
The precise definition of weak matrix regularization is given in \cite{Gohara_Sako}, and we refer the reader to that paper for the details. Roughly speaking, it is a sequence of quantizations into matrix algebras. More precisely, it is a sequence of linear maps from a Poisson algebra to matrix algebras such that the commutator of the images agrees with $\hbar$ times the image of the Poisson bracket, up to higher-order terms in $\hbar$. Such a sequence of quantizations is called a weak matrix regularization.
The conditions for weak matrix regularization are weaker than those required in the usual definition of matrix regularization. In particular, the usual definition often requires the classical space to be a symplectic manifold, whereas weak matrix regularization imposes no such requirement.
Indeed, in this paper, we consider the quantization of odd-dimensional spaces such as $\mathbb{R}^3$ and $S^7$.

\subsection{Weak matrix regularization of $A_{\mathfrak{g}}$}\label{sec2_1}
Let $\mathfrak{g}$ be a $d$-dimensional Lie algebra.
Let $e=\{  e_1 , e_2 , \cdots ,e_d \}$ be a fixed basis of $\mathfrak{g}$
satisfying commutation relations $[ e_i , e_j ]= f_{ij}^k e_k $,
where $f_{ij}^k$ are structure constants of $\mathfrak{g}$. 
For this Lie algebra $\mathfrak{g}$ we introduce a sequence of irreducible representations 
$\rho^{\mu} : \mathfrak{g} \rightarrow gl(V^\mu ) ( \mu =1,2, \cdots )$ and
a sequence $\hbar(\mu) ( \mu =1,2, \cdots )$ with $\hbar({\mu} ) \neq 0$.
(Here, for simplicity, we have taken $\mu=1,2,\dots$; however, $\mu$ is merely an index distinguishing the representations. Later, it will denote, for example, a highest weight, and should not be understood as necessarily taking values in the natural numbers.)
Here each $V^\mu$ is a finite dimensional vector space chosen as appropriate, and we put a condition
$\displaystyle \lim_{\mu \rightarrow \infty} \dim V^\mu = \infty$.
We denote the corresponding rescaled basis of $e$ by
\begin{align}\label{basis_matrix}
e^{(\mu)} =\{ 
\hbar(\mu) \rho^{\mu} (e_1 ), \hbar(\mu) \rho^{\mu} (e_2 ), 
\cdots &, \hbar(\mu) \rho^{\mu} (e_d ) \}
= \{ 
e_1^{(\mu)} , e_2^{(\mu)} , \cdots  ,e_d^{(\mu)} \} \\
[ e_i^{(\mu)} , e_j^{(\mu)} ]=& \hbar( \mu )f_{ij}^k e_k^{(\mu)} . \label{hbarCommRel}
\end{align}

Next, we introduce a Poisson algebra corresponding to this Lie algebra.
A well-known way to endow an algebra with a Poisson algebra structure
is to use the Kirillov-Kostant Poisson bracket.
(See for example 
\cite{matrix1,Kostant,Weinstein_Lu
}.)
Let $x=(x_1, x_2, \cdots , x_d) \in {\mathbb R}^d$ be commuting coordinates.
We consider a Lie-Poisson algebra 
of this coordinate polynomial ring $(\mathbb{C}[x]  , \cdot , \{ ~ , ~ \})$ by
$
\{ x_i , ~ x_j\}:= f_{ij}^k x_k , 
$
where $\displaystyle  f_{ij}^k \in {\mathbb C}$ are structure constants.
Concretely, this Poisson bracket is realized by
\begin{align*}
\{ f , ~ g\}:= f \omega g := f \overleftarrow{\partial}_i \omega_{ij} 
\overrightarrow{\partial}_j g := ({\partial}_i f) \omega_{ij} 
({\partial}_j g) 
\end{align*}
where $\displaystyle \partial_i = \frac{\partial}{\partial x_i}$ and
$\omega_{ij} = f_{ij}^k x_k $.
We denote the Poisson algebra $(\mathbb{C}[x]  , \cdot , \{ ~ , ~ \})$ by 
 $A_\mathfrak{g}$.
 Throughout this paper, the geometric level sets are regarded as real
algebraic subsets of $\mathbb{R}^d$, whereas their polynomial algebras
are taken over $\mathbb{C}$.
 We define degree of a monomial 
 $x^\alpha = (x_{1})^{\alpha_1} (x_{2})^{\alpha_2} \cdots  (x_{d})^{\alpha_d} $ by
 $\deg  x^\alpha :=|\alpha|:= \sum_{i=1}^d \alpha_i$. For the polynomial $f(x) = \sum_\alpha a_\alpha x^\alpha $, where
 $a_\alpha \in \mathbb{C}$, $\deg f(x) $ is defined by $\displaystyle \max_{a_\alpha \neq 0} \{  \deg x^\alpha \} $.
For multi-index, the notation $x^I =x_{i_1} x_{i_2} \cdots x_{i_m}$ 
is also often used below, where $\deg x^{I} = m=: |I|$.
 \\
 \bigskip

Next, let us construct quantization maps 
from the Lie-Poisson algebra $A_\mathfrak{g}$ to 
$T_{\mu}$ that is $\langle e^{(\mu)} \rangle$, i.e.
the $R$-algebra generated by $ e^{(\mu)} $. Here, $R$ denotes either $\mathbb{C}$ or $\mathbb{C}[\hbar]$, as appropriate.
(If $\mathfrak{g}$ is a semisimple Lie algebra, then $T_\mu$ is given by $End (V^\mu )= gl (V^\mu )$.) 
We choose a basis of $T_{\mu}$,
$E_1, E_2, \dots, E_D$, as polynomials of $e^{(\mu)}$.
Any polynomial of $e^{(\mu)}$ can be rewritten by $\hbar$ polynomial in $\langle \rho^\mu (e ) \rangle [\hbar ]$. 
So, a degree ${\rm deg}$ of any polynomial of $e^{(\mu)}$ can be defined by $\hbar$'s degree. 
Using $ E^i_{j_1 , \cdots , j_k}  \in {\mathbb C}$, $E_i~ (i=1, \cdots , D)$ are expressed as
$$ E_i  = \sum_k E^i_{j_1 , \cdots , j_k} e^{(\mu)}_{j_1} \cdots  e^{(\mu)}_{j_k}
= \sum_k \hbar^k(\mu)  E^i_{j_1 , \cdots , j_k} \rho^\mu (e_{j_1}) \cdots  \rho^\mu (e_{j_k}),
$$
where $E^i_{j_1 , \cdots , j_k}$ is independent of $\hbar$.
Note that Einstein summation convention is used for each $j_l$.
For each $E_i$, such expression given by $e^{(\mu)}$ is not unique in general.
We choose an expression that minimizes $\displaystyle \max_{E^i_{j_1 , \cdots , j_k}\!\!\!\!\!\!\!\!\!\!  \neq 0 } \{k \}$.
Then there exists the degree of $\hbar$ of $E_i$ i.e., 
$\displaystyle \deg E_i := \max_{E^i_{j_1 , \cdots , j_k}\!\!\!\!\!\!\!\!\!\!  \neq 0 } \{k \}$.
The highest degree of $\{E_1, E_2, \dots, E_D \}$ is denoted by $n_\mu$, i.e.,
$n_\mu  = \max \{ {\deg}E_1, \cdots , {\rm deg}E_D \}$.
$n_\mu$ does not depend on the choice of $\{E_1, E_2, \dots, E_D \}$.

Any $M(\hbar(\mu) ) \in gl (V^\mu ) [\hbar(\mu)] $ is expressed as
$$M(\hbar(\mu) )= \sum_{0 \le k} \hbar(\mu)^k M_k = \sum_{0 \le k \le n_\mu} \hbar(\mu)^k M_k + \tilde{O}(\hbar(\mu)^{n_\mu +1}), $$
where each $M_k \in gl(V^\mu)$ does not depend on $\hbar(\mu)$. 
(The precise definition of the notation $\tilde{O}(\hbar(\mu)^n)$ is given in the Appendix of \cite{Gohara_Sako}. Roughly speaking, it means that, when regarded as a polynomial in the variable $\hbar(\mu)$, every term has degree at least $n$.
)
For any $M(\hbar(\mu) )$, we define the projection map
$R_\mu $ by
\begin{align}
R_\mu (M) := \sum_{0 \le k \le n_\mu} \hbar(\mu)^k M_k .
\end{align}



\begin{definition}\label{def_FuzzyQ} 
We define a linear map
$\tilde{q}_\mu : A_\mathfrak{g} \rightarrow T_\mu $
by
\begin{align} \label{tildeq_mu}
\sum_I f_I x^I := \sum_k f_{i_1, \cdots , i_k} x_{i_1} \cdots x_{ i_k} \mapsto 
\sum_I f_I e^{(\mu)}_{(I)} = \sum_k f_{i_1, \cdots , i_k} e^{(\mu)}_{(i_1, \cdots , i_k)} 
\end{align}
where $f_{i_1, \cdots , i_k} \in \mathbb{C}$ is completely symmetric
in the indices $i_1,\ldots,i_k$,
\begin{align*}
e^{(\mu)}_{(I)}:= e^{(\mu)}_{(i_1, \cdots ,i_k)}:=
\frac{1}{k!}\sum_{\sigma \in Sym(k)}
e^{(\mu)}_{i_{\sigma(1)}} \cdots e^{(\mu)}_{ i_{\sigma(k)}} ,
\end{align*}
and we require 
$\tilde{q}_\mu (1) = Id \in gl(V^\mu)$.
We also define $q_\mu : A_\mathfrak{g} \rightarrow T_\mu $
by $R_\mu \circ \tilde{q}_\mu $ , i.e.
\begin{align} \label{q_mu}
\sum_I f_I x^I := \sum_k f_{i_1, \cdots , i_k} x_{i_1} \cdots x_{ i_k} \mapsto 
\sum_I f_I e^{(\mu)}_{(I)} = \sum_k^{n_\mu} f_{i_1, \cdots , i_k} e^{(\mu)}_{(i_1, \cdots , i_k)}  .
\end{align}
\end{definition}
For any $f,g \in A_{\mathfrak{g}}$ with $\deg f + \deg g \le n_\mu$, there exists $P = \sum_i^D c_i(\hbar(\mu)) E_i  \in T_\mu$
with $c_i(\hbar) \in {\mathbb C}[\hbar (\mu )]$ satisfying
\begin{align}
\label{asym_hom_mu}
 q_{\mu} ( f )  q_{\mu} ( g ) = q_{\mu} ( f  g ) +\hbar(\mu) P. 
\end{align}





Any semisimple Lie algebra 
admits a sequence of irreducible representations 
$\{ V^\mu \}$ that satisfies the following condition.
The Casimir operators $C_{i}$ act on $V^\mu$ as scalar multiples of the identity:
\begin{align}
C_{i} = C_i(\mu) Id_\mu , \quad
\lim_{\dim V^\mu \to \infty} |C_i(\mu) | = \infty .
\label{okubo} 
\end{align}
Here $Id_\mu$ denotes the identity operator on $V^\mu$, and
$C_i(\mu) $ is the eigenvalue of $C_{i}$ in the representation $V^\mu$.

Casimir polynomials of $A_\mathfrak{g}$ are defined by
$
\{ x_i , f(x) \} = 0~ ( i = 1, \cdots , d ) ,
$
and we denote the set of all  Casimir polynomials of $A_\mathfrak{g}$ by $CaP$.
\begin{proposition}
For any Casimir polynomial 
$ f^C  \in CaP$, $ q_\mu  (f^C ) \in T_\mu$ 
 is a Casimir operator.
\end{proposition}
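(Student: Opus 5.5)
We need to show that $q_\mu(f^C)$ commutes with every $e^{(\mu)}_i$, and hence with all of $T_\mu$. The plan is to realize symmetrization as the symmetrization map $\beta:S(\mathfrak g)\to U(\mathfrak g)$ and use its $\mathfrak g$-equivariance.

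First, the Poisson condition $\{x_i,f^C\}=0$ says that $f^C$ is annihilated by the derivation $\mathrm{ad}_{x_i}=f_{ij}^k x_k\,\partial_j$ for every $i$. This derivation preserves degree, so each homogeneous component $f^C_m$ of $f^C$ is also a Casimir polynomial. Writing $f^C_m=f_{i_1\cdots i_m}x_{i_1}\cdots x_{i_m}$ with symmetric coefficients, the condition becomes the invariance identity
\begin{align*}
\sum_{l=1}^m f_{ij_l}^{k}\, f_{j_1\cdots k\cdots j_m}=0 ,
\end{align*}
where $k$ replaces the $l$-th index. By linearity it then suffices to treat a single homogeneous component.

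Next, compute $[e^{(\mu)}_i, e^{(\mu)}_{(J)}]$ using the Leibniz rule for commutators together with (\ref{hbarCommRel}). The Leibniz expansion gives
\begin{align*}
[e^{(\mu)}_i, e^{(\mu)}_{j_{\sigma(1)}}\cdots e^{(\mu)}_{j_{\sigma(m)}}]=\hbar(\mu)\sum_l f_{i j_{\sigma(l)}}^k\, e^{(\mu)}_{j_{\sigma(1)}}\cdots e^{(\mu)}_k\cdots e^{(\mu)}_{j_{\sigma(m)}} .
\end{align*}
Averaging over $\sigma$ and contracting with the symmetric coefficients $f_{j_1\cdots j_m}$, the replaced slot can be relabelled by symmetry. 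This yields
\begin{align*}
[e^{(\mu)}_i,\tilde q_\mu(f^C_m)]=\hbar(\mu)\sum_{l} f_{ij_l}^k f_{j_1\cdots j_m}\, e^{(\mu)}_{(j_1\cdots k\cdots j_m)} .
\end{align*}
This is exactly $\hbar(\mu)\,\tilde q_\mu(\{x_i,f^C_m\})$, and it vanishes by the invariance identity above. This is the standard statement that symmetrization intertwines the adjoint action on $S(\mathfrak g)$ and $U(\mathfrak g)$. Consequently $\tilde q_\mu(f^C)$ commutes with every generator and is a Casimir operator. For irreducible $V^\mu$ it is moreover a scalar by Schur's lemma.

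The main obstacle is the truncation $R_\mu$, since $q_\mu=R_\mu\circ\tilde q_\mu$ differs from $\tilde q_\mu$. I would argue that truncation respects the grading. The commutator with $e^{(\mu)}_i$ maps an $\hbar$-homogeneous piece of degree $k$ to degree $k+1$ with the same relabelling structure. So the invariance argument above applies separately to each homogeneous component $f^C_m$, and $R_\mu$ only keeps or discards whole components $\tilde q_\mu(f^C_m)$ with $m\le n_\mu$, each of which is individually central. Hence $q_\mu(f^C)=\sum_{m\le n_\mu}\tilde q_\mu(f^C_m)$ is a sum of Casimir operators.

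To make this step rigorous, I must check that $R_\mu$, defined through the $\hbar$-degree of the formal expression, acts degreewise on $\tilde q_\mu$. This holds because $\tilde q_\mu(x^I)=\hbar(\mu)^{|I|}\rho^\mu(e)_{(I)}$ is $\hbar$-homogeneous of degree $|I|$.
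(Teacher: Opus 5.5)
The paper gives no proof of this proposition. It appears in the review section, and again in Section~\ref{sect4}, as a result quoted from the earlier work of Gohara and the author, so there is no in-paper argument to compare against.

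Your proof is correct and self-contained. Its key ingredient is the \emph{exact} identity $[e^{(\mu)}_i,\tilde q_\mu(f_m)]=\hbar(\mu)\,\tilde q_\mu(\{x_i,f_m\})$ when the first argument is linear, i.e.\ the $\mathfrak g$-equivariance of symmetrization. This exactness is genuinely needed. The general weak-matrix-regularization property the paper records, (\ref{q_AI_quantization}) with $I(C)=\{0\}$, only gives $[q_\mu(x_i),q_\mu(f^C)]=\hbar(\mu)^2P$ with no control on $P$, which is not enough to conclude centrality.

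Your handling of the truncation is also right, and simpler than you feared. By (\ref{q_mu}), $q_\mu$ is literally $\tilde q_\mu$ applied to the part of the polynomial of degree at most $n_\mu$. Since $\{x_i,\cdot\}=f_{ij}^k x_k\partial_j$ preserves degree, every homogeneous component of $f^C$ is itself a Casimir polynomial. Hence $q_\mu(f^C)$ is a finite sum of central elements, for any representation, irreducible or not.

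One cosmetic point concerns your coefficient identity. Relabelling $k\leftrightarrow j_l$ in the Leibniz expansion gives $\sum_l f_{ik}^{\,j_l} f_{j_1\cdots k\cdots j_m}=0$. The form you wrote, with $f_{ij_l}^{\,k}$, agrees with it up to sign only because the structure constants are totally antisymmetric in a basis orthonormal for the invariant inner product. Since you conclude directly from $\{x_i,f_m\}=0$, this does not affect the argument. Neither does the extra factor of $i$ in the paper's Hermitian conventions for $su(2)$ and $su(3)$.
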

Let $C^k(x)$ be a Casimir polynomial given by a homogeneous polynomial of degree $d_k$.
By (\ref{okubo}), 
$\lim_{\dim V^\mu \rightarrow \infty} |C_i(\mu) | = \infty$. 
We will uniquely determine $\hbar(\mu)$ as follows.
Let $C^k(x) = \sum_\alpha C_{k, J} x^{J}= \sum_\alpha C_{k , J} x_{j_1}\cdots x_{j_{d_k}} $ be a linearly independent  $d_k$-th-degree Casimir polynomial.
We define $C_k( e^\mu) $ by
\begin{align}
C_k( e^\mu) := q_\mu ( C^k(x) )
= \sum_J C_{k ,J}  e^{(\mu)}_{(j_{1} , \cdots , j_{d_k})} .
\end{align}
Under  (\ref{okubo}), it can be fixed to any one eigenvalue $\lambda_k \in {\mathbb C}$ 
of the matrix $C_k( e^\mu)$
by determining the sequence of $\hbar(\mu)$ 
and $V^\mu$ appropriately;
\begin{align} \label{fixed_casimir_relation}
C_k( e^\mu) = \hbar^{d_k} (\mu ) C_{k}
= \lambda_k Id_\mu
\end{align}
for any $q_\mu : A_\mathfrak{g} \to \operatorname{End}(V^\mu) $.
Here, we take $\hbar(\mu)$ to be real and, whenever there is a sign ambiguity, choose the positive value.

\subsection{(Weak) Matrix regularization for Lie-Poisson varieties $A_\mathfrak{g} /I(C) $}\label{sect2_2}
Consider a vector space $C \subset CaP$ whose basis is $\{ f_i^C \} $, i.e.,
$C:= \{ \sum_i a_i f_i^C ~;~ a_i \in  \mathbb{C},  ~f_i^C \in CaP \}$.
We introduce an ideal of $A_\mathfrak{g} $ generated by $C$ as $I(C)$.
This ideal is compatible with the Poisson structure because $\{ x_i , f_j^C(x) \}=0$.
So, we can introduce the new Poisson algebra as follows:
\begin{align}
A_\mathfrak{g}  / I(C) := \left\{ [f(x)]  ~ | ~ f(x) \in A_\mathfrak{g} 
\right\} ,
\end{align}
where $ [f(x)] = \{ f(x) + h(x) ~|~ h(x) \in I(C) \} $, and the sum and multiplication are defined
as $[f(x)]+[g(x)]= [f(x)+g(x)]$ and $[f(x)]\cdot [g(x)] = [f(x)\cdot g(x)]$.
The Poisson bracket is also defined by 
as $ \{ [f(x)] , [g(x)] \} := [ \{ f(x) , g(x) \}] .$
We abbreviate this Poisson algebra $(A_\mathfrak{g}  / I(C) , \cdot  , \{ ~,~ \})$ as $A_\mathfrak{g}  / I(C)$.

The quotient and remainder of a multivariate polynomial cannot be uniquely determined in general. However, after choosing a monomial ordering, such as the lexicographic order, and thereby inducing an ordering on multivariate polynomials, the remainder of a polynomial can be uniquely defined.

It is known that if we fix the ordering every ideal $I$ has a unique reduced Gr\"obner basis,
and  the following fact is known. 

\begin{theorem}\label{reducedGrobner} {\rm (See for example \cite{Dumniit_Foote_Abstract Algebra
}.)}
Fix a monomial ordering on ${\mathbb C} [x]$.
1) Every ideal $I \subset  {\mathbb C} [x]$ has a unique reduced Gr\"obner basis.
2)  Let ${g_1, \dots , g_m}$ be the Gr\"obner 
basis for the ideal $I$ in $ {\mathbb C} [x]$. 
Then every polynomial $f \in  {\mathbb C} [x]$ can be written uniquely in the form 
\begin{align}
f= h_f + r_f \label{IplusG}
\end{align}
where $h_f  \in  I$ and no monomial term of the $r_f$ is divisible by any leading term of $g_i $. 
\end{theorem}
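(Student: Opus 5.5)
The plan is the standard division-algorithm argument from Gr\"obner basis theory, fixing the monomial ordering once and for all. \emph{Part 1), existence.} Existence of a finite Gr\"obner basis comes from Hilbert's basis theorem applied to the monomial ideal $\mathrm{LT}(I)=\langle \mathrm{LT}(f)\,;\, f\in I\rangle$. By Dickson's lemma this ideal is generated by finitely many leading monomials $\mathrm{LT}(g_1),\dots,\mathrm{LT}(g_m)$ with $g_i\in I$. I will then show that these $g_i$ generate $I$, using the division algorithm described below. To pass to a reduced basis I will:
\begin{enumerate}
\item discard every $g_i$ whose leading term is divisible by the leading term of another $g_j$;
\item normalize the leading coefficients to $1$;
\item replace each $g_i$ by its remainder upon division by the others.
\end{enumerate}
None of these steps changes $\mathrm{LT}(I)$.

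\emph{Part 1), uniqueness.} Suppose $G$ and $G'$ are two reduced Gr\"obner bases of $I$. Both sets of leading monomials must equal the minimal generating set of the monomial ideal $\mathrm{LT}(I)$, and that minimal set is unique. So for each $g\in G$ there is a $g'\in G'$ with $\mathrm{LT}(g)=\mathrm{LT}(g')$. Then $g-g'\in I$, and its leading terms cancel. By reducedness, no monomial of $g-g'$ is divisible by any element of $\mathrm{LT}(G)$. On the other hand, a nonzero element of $I$ has its leading term in $\mathrm{LT}(I)=\langle \mathrm{LT}(G)\rangle$. Hence $g=g'$.

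\emph{Part 2).} For existence of the decomposition $f=h_f+r_f$ I will run the multivariate division algorithm. At each step, look at the leading term of the current polynomial $p$:
\begin{itemize}
\item if some $\mathrm{LT}(g_i)$ divides $\mathrm{LT}(p)$, subtract the corresponding multiple of $g_i$;
\item otherwise, move $\mathrm{LT}(p)$ into the remainder.
\end{itemize}
Each step strictly lowers the leading monomial of $p$. A monomial order is a well-order, so the process terminates. The subtracted multiples sum to $h_f\in I$, and by construction no term of $r_f$ is divisible by any $\mathrm{LT}(g_i)$.

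For uniqueness, suppose $f=h+r=h'+r'$ are two such decompositions. Then $r-r'=h'-h\in I$. If $r-r'\neq 0$, its leading term lies in $\mathrm{LT}(I)$, which is generated by the $\mathrm{LT}(g_i)$ because $\{g_i\}$ is a Gr\"obner basis. But every monomial of $r-r'$ is a monomial of $r$ or of $r'$, so none is divisible by any $\mathrm{LT}(g_i)$. This contradiction gives $r=r'$, and then $h=h'$.

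The only genuinely nontrivial ingredient is Dickson's lemma, equivalently the Noetherianity of $\mathbb{C}[x]$. Together with the well-ordering property of the monomial order, it guarantees both the finiteness of the basis and the termination of the algorithm. Since the theorem is cited as standard, I would quote these two facts rather than reprove them.
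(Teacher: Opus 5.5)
Your proposal is correct and follows the standard argument of the reference the paper cites in place of a proof (Dummit--Foote): Hilbert/Dickson give finitely many $g_i\in I$ with $\mathrm{LT}(I)=\langle \mathrm{LT}(g_1),\dots,\mathrm{LT}(g_m)\rangle$, the division algorithm gives existence of $f=h_f+r_f$, and the observation that a nonzero element of $I$ has its leading term in $\mathrm{LT}(I)$ gives the contradiction for both uniqueness claims. One small slip: step~1 of your reduction should discard redundant elements one at a time, keeping one representative when two $g_i$ share a leading monomial, because as worded it would discard both members of such a pair.
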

In the following, we fix a monomial ordering by the graded lexicographic ordering.


The Gr\"{o}bner basis becomes important only in special cases where the Lie algebra and the sequence of representations are chosen so that the eigenvalues of several Casimir operators can be fixed simultaneously by a single parameter $\hbar(\mu)$.
For example, let the Lie algebra be of the form $su(n)\oplus \cdots \oplus su(n)$. If one takes a sequence of representations obtained by using the same sequence of representations for each $su(n)$ factor, then one can construct examples described by a system of equations $C^k_i(x) =\lambda^k_i (i=1, \dots )$.
In the present paper, however, we do not treat such cases. Instead, we restrict ourselves to cases in which the Gr\"{o}bner basis property does not play an essential role.



Let us choose the generators of the ideal $I(C) \subset A_\mathfrak{g}$ for some fixed $k$ as
$ 
( f^C(x):= C^k (x) - \lambda_k ) 
$
i.e.,
\begin{align}
I(C)& := \left\{  f^C(x) g(x) \in  A_\mathfrak{g} ~ | ~ 
f^C(x) =  C^k (x) - \lambda_k , ~
g(x) \in  A_\mathfrak{g} ~ \right\} .  \label{2_13}
\end{align}


Next, we introduce $q_{A/I,\mu}$, which is a quantization of the space defined by $C^k (x) = \lambda_k$.
Although this differs from the construction of $q_{A/I,\mu}$ given in \cite{Gohara_Sako}, the following provides the shortest way to define the same object \cite{Gohara_phd}.
For the Gr\"obner basis $G$, the linear map ${\mathcal R}_G : A_\mathfrak{g} / I(C) \rightarrow A_\mathfrak{g}$
\begin{align}
[f] \mapsto r_{f}
\end{align}
is well-defined, where $r_{f}$ is the remainder 
by dividing $f$ by the Gr\"obner basis $G$.
\begin{definition}
When $n_\mu$ is larger than $d_k$,  
 $q_{A/I, \mu} : A_\mathfrak{g} / I(C) \rightarrow \operatorname{End}(V^\mu )$ is  defined by
\begin{align}
q_{A/I, \mu} = R_\mu \circ \tilde{q}_\mu \circ {\mathcal R}_G = q_\mu \circ {\mathcal R}_G.
\label{q_AImu_general}
\end{align}
\end{definition}
For $f= h_f + r_{f} $ with
$ \displaystyle
r_{f}= \sum_{J} a_{J} x^{J}
= \sum_{\deg x^{J} \le n_\mu } \!\!\!\!\!
a_{J} x^{J}
+
 \sum_{\deg x^{J} > n_\mu  } \!\!\!\!\!
a_{J} x^{J} 
$, the explicit calculation of 
$q_{A/I, \mu}  : A_\mathfrak{g} / I(C)\to \operatorname{End}(V^\mu)$ is given as
\begin{align}
q_{A/I, \mu}  ([f(x)] ) 
= \sum_{m \le n_\mu } \!
a_{J}~ e_{(j_1 , \cdots  , j_m )}^{(\mu )}.
\end{align}
As can be easily seen from the definition of $q_{A/I, \mu}$, when we chose $I(C) =\{ 0 \} $, 
it is the same as $q_\mu$. 
Therefore, $q_{A/I, \mu}$ is a generalization of $q_\mu$ .\\

For this $q_{A/I ,\mu}$,
the following theorems are obtained.
\begin{theorem}\label{prop4_12}
For any $[f], [g] \in A_\mathfrak{g} / I(C)$ with $\deg r_f +\deg r_g \le n_\mu$ and $\deg r_{fg} \le n_\mu$,
there exists $P = \sum_i^D c_i(\hbar(\mu)) E_i  \in T_\mu$
with $c_i(\hbar) \in {\mathbb C}[\hbar (\mu )]$ satisfying
$$
q_{A/I, \mu} ( [ f ] )  q_{A/I, \mu} ( [ g ] ) = q_{A/I, \mu} ( [f] [g]) + \hbar(\mu)  P .
$$
Here $r_f$, $r_g$, $h_f$, and $h_g$ are used in the sense defined in Theorem \ref{reducedGrobner}.
\end{theorem}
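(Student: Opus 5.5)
The plan is to reduce the statement to the asymptotic homomorphism property (\ref{asym_hom_mu}) for $q_\mu$, applied to the remainders, and then to control the correction produced by the Casimir relation (\ref{fixed_casimir_relation}). By definition, $q_{A/I,\mu}([f]) = q_\mu(r_f)$ and $q_{A/I,\mu}([g]) = q_\mu(r_g)$. Since $\deg r_f + \deg r_g \le n_\mu$, (\ref{asym_hom_mu}) gives
\begin{align*}
q_\mu(r_f)\, q_\mu(r_g) = q_\mu(r_f r_g) + \hbar(\mu) P_1
\end{align*}
for some $P_1 = \sum_i c_i^{(1)}(\hbar(\mu)) E_i$. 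It then remains to compare $q_\mu(r_f r_g)$ with $q_{A/I,\mu}([f][g]) = q_\mu(r_{fg})$. Since $fg - r_f r_g \in I(C)$, we have $r_{fg} = r_{r_f r_g}$. Hence $r_f r_g = r_{fg} + h$ with $h \in I(C)$ and $\deg h \le n_\mu$; both $r_f r_g$ and $r_{fg}$ have degree at most $n_\mu$.

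The key step is to show that $q_\mu(h) = \hbar(\mu) P_2$ for every $h = (C^k - \lambda_k) u$ with $\deg h \le n_\mu$. Division by the single generator under graded lex ordering keeps $\deg u \le \deg h - d_k$. I would expand $u = \sum_J u_J x^J$ and treat each monomial $(C^k - \lambda_k) x^J$ separately. By (\ref{asym_hom_mu}) applied to $C^k$ and $x^J$ (the degrees add to at most $n_\mu$),
\begin{align*}
q_\mu(C^k x^J) = q_\mu(C^k)\, q_\mu(x^J) - \hbar(\mu) P_J = \lambda_k\, q_\mu(x^J) - \hbar(\mu) P_J ,
\end{align*}
using the Casimir relation $q_\mu(C^k) = C_k(e^\mu) = \lambda_k Id_\mu$ from (\ref{fixed_casimir_relation}). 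Therefore $q_\mu((C^k - \lambda_k) x^J) = -\hbar(\mu) P_J$, and summing by linearity gives $q_\mu(h) = \hbar(\mu) P_2$ with $P_2$ a $\mathbb{C}[\hbar(\mu)]$-combination of the $E_i$.

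Combining the two pieces gives
\begin{align*}
q_{A/I,\mu}([f])\, q_{A/I,\mu}([g]) = q_{A/I,\mu}([fg]) + \hbar(\mu)(P_1 + P_2),
\end{align*}
which is the claim with $P = P_1 + P_2$. The main obstacle is bookkeeping. One must ensure that the symmetrized product in $q_\mu$ does not make $q_\mu(C^k)$ differ from the scalar $\lambda_k Id_\mu$; this is exactly the content of the definition of $C_k(e^\mu)$, so the relation holds. One must also ensure that all degrees stay within $n_\mu$, so that the truncation $R_\mu$ does not interfere. For the latter I would rely on the graded ordering: it guarantees that the quotient $u$ satisfies $\deg u + d_k \le \deg(r_f r_g) \le n_\mu$, so every application of (\ref{asym_hom_mu}) is legitimate.
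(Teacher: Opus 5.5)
Your proof is correct and takes essentially the same approach as the paper. The paper does not reprove this theorem (it is quoted from \cite{Gohara_Sako}), but it uses the same two steps in Section~\ref{sect5} for $q([C^3])$ and in part i) of the proof in Section~\ref{sect6}: apply (\ref{asym_hom_mu}) to the remainders, then use $q_\mu(C^k(x)-\lambda_k)=0$ with (\ref{asym_hom_mu}) to show the ideal part $h=u\,(C^k-\lambda_k)$ maps to $\hbar(\mu)P$. Two minor simplifications: you can apply (\ref{asym_hom_mu}) directly to $u$ and $C^k-\lambda_k$ instead of monomial by monomial, and $\deg u=\deg h-d_k$ holds exactly because $\mathbb{C}[x]$ is an integral domain, so the graded order is not needed for that bound.
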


\begin{theorem}
$q_{A/I, \mu} $
is a weak matrix regularization, i.e., for $\forall f, g \in A_\mathfrak{g} $, 
there exists $P = \sum_i^D c_i(\hbar(\mu)) E_i  \in T_\mu$,
where each $c_i(\hbar(\mu)) $ is a polynomial in $\hbar(\mu)$, such that
\begin{align}
[ q_{A/I, \mu} ( [ f ] ) ,  q_{A/I, \mu} ( [ g ] ) ] = \hbar(\mu ) q_{A/I, \mu} ( \{ [f] , [g] \} ) + \hbar^2(\mu) P.
\label{q_AI_quantization}
\end{align}
\end{theorem}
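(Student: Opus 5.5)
We will reduce the statement to the corresponding property of $q_\mu$ on $A_\mathfrak{g}$ and then use that $q_\mu$ sends the generator $f^C$ of $I(C)$ to zero.

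\emph{Reduction to remainders.} By Theorem \ref{reducedGrobner}, write $f=h_f+r_f$ and $g=h_g+r_g$. By definition (\ref{q_AImu_general}), $q_{A/I,\mu}([f])=q_\mu(r_f)$ and $q_{A/I,\mu}([g])=q_\mu(r_g)$. Since $I(C)$ is a Poisson ideal ($f^C$ is Casimir), we have
\[
\{[f],[g]\}=[\{r_f,r_g\}],
\]
so $q_{A/I,\mu}(\{[f],[g]\})=q_\mu(r_{\{r_f,r_g\}})$. The task is therefore to compare $[q_\mu(r_f),q_\mu(r_g)]$ with $\hbar\, q_\mu(r_{\{r_f,r_g\}})$.

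\emph{Commutator of symmetrized products.} By linearity it suffices to treat monomials $x^I$ and $x^J$. The commutator $[e^{(\mu)}_{(I)},e^{(\mu)}_{(J)}]$ is computed with the Leibniz rule and (\ref{hbarCommRel}). The leading term is
\[
\hbar(\mu)\sum f_{i_aj_b}^k\, e^{(\mu)}_{(I\setminus i_a,\,J\setminus j_b,\,k)}=\hbar(\mu)\,\tilde q_\mu(\{x^I,x^J\}).
\]
Reordering the non-symmetrized products into symmetrized ones produces further commutators, each carrying an extra factor $\hbar(\mu)$. These give terms of the form $\hbar^2 P$, with $P$ a polynomial in the $e^{(\mu)}$, hence a $\mathbb{C}[\hbar]$-combination of the $E_i$. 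Applying $R_\mu$ only discards terms of order $\tilde O(\hbar^{n_\mu+1})$, and $n_\mu\ge 1$, so those discarded terms are also absorbed into $\hbar^2P$. This yields
\[
[q_\mu(r_f),q_\mu(r_g)]=\hbar\, q_\mu(\{r_f,r_g\})+\hbar^2P.
\]

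\emph{Passing from $\{r_f,r_g\}$ to its remainder.} Set $h:=\{r_f,r_g\}-r_{\{r_f,r_g\}}\in I(C)$, so $h=(C^k-\lambda_k)u$ for some $u$. We must show that $q_\mu(h)$ is $O(\hbar)$ in the above sense. Since the symmetrization is compatible with products up to commutator terms, as in (\ref{asym_hom_mu}), we get $q_\mu((C^k-\lambda_k)u)=q_\mu(C^k-\lambda_k)\,q_\mu(u)+\hbar P'$. By (\ref{fixed_casimir_relation}), $q_\mu(C^k)=\lambda_k\,Id_\mu$, so the first term vanishes. Then $\hbar\, q_\mu(h)=\hbar^2P'$, which is absorbed into the error term, and the statement follows.

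\emph{Main obstacle.} The hardest step is the degree bookkeeping. The estimate (\ref{asym_hom_mu}) holds only when $\deg(C^k-\lambda_k)+\deg u\le n_\mu$, whereas the statement is claimed for all $f,g$. Beyond this degree, the truncation $R_\mu$ together with the fact that $T_\mu=\operatorname{End}(V^\mu)$ is spanned by the $E_i$ must be used to absorb everything into $\hbar^2P$. I would first try to show that any $\tilde O(\hbar^{n_\mu+1})$ contribution killed or altered by $R_\mu$ can be rewritten, using the basis $\{E_i\}$ with $\mathbb{C}[\hbar]$ coefficients, as $\hbar^2$ times an element of $T_\mu$. If this fails, I would restrict to the degree range where Theorem \ref{prop4_12} applies, as was done there.
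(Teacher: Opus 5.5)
Your outline is the natural one, and the paper gives no proof of its own here (the theorem is quoted from the earlier Gohara--Sako work). The logic is right: $I(C)$ is a Poisson ideal, so $r_{\{f,g\}}=r_{\{r_f,r_g\}}$. The commutator property of $q_\mu$ handles $[q_\mu(r_f),q_\mu(r_g)]$. The leftover $\hbar(\mu)\,q_\mu(h)$, with $h=\{r_f,r_g\}-r_{\{f,g\}}=(C^k-\lambda_k)u\in I(C)$, is to be killed using $q_\mu(C^k-\lambda_k)=0$.

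As written, however, the argument does not prove the stated theorem. Step 3 applies (\ref{asym_hom_mu}) to $(C^k-\lambda_k)\,u$, which needs $d_k+\deg u\le n_\mu$. But $d_k+\deg u=\deg h$ can be as large as $\deg r_f+\deg r_g-1$, so the hypothesis fails in general, even when $\deg r_f,\deg r_g\le n_\mu$. You leave this open. Your fallback, restricting to the range of Theorem~\ref{prop4_12}, would prove a degree-restricted statement, not the claim for all $f,g$.

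The missing ingredient is that the product rule holds in every degree for the untruncated map $\tilde q_\mu$. For all $a,b\in A_\mathfrak{g}$, the difference $\tilde q_\mu(a)\tilde q_\mu(b)-\tilde q_\mu(ab)$ is $\hbar(\mu)$ times a polynomial in the $e^{(\mu)}_i$ with $\mathbb{C}[\hbar]$-coefficients. This is because rewriting $e^{(\mu)}_{(K)}e^{(\mu)}_{(L)}$ as $e^{(\mu)}_{(K\cup L)}$ uses only (\ref{hbarCommRel}), and each commutator costs one factor of $\hbar(\mu)$. Equivalently, $\mathcal{U}_\mathfrak{g}[\hbar]$ is commutative modulo $\hbar$, which is why the enveloping-algebra description in the Remark of Subsection~\ref{sect2_2} has no degree restriction before the final $R_\mu$. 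Since $d_k\le n_\mu$, we have $\tilde q_\mu(C^k-\lambda_k)=q_\mu(C^k-\lambda_k)=0$, hence $\tilde q_\mu(h)=\hbar(\mu)P''$.

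The truncation is then harmless. We have $\tilde q_\mu(h)-q_\mu(h)=\sum_{|K|>n_\mu}h_K\,\hbar(\mu)^{|K|}\rho^\mu(e)_{(K)}$, where $\rho^\mu(e)_{(K)}$ is the symmetrized product of the $\rho^\mu(e_k)$, $k\in K$. Expand $\rho^\mu(e)_{(K)}$ in a basis $\{E_i\}$ of symmetrized monomials of degree at most $n_\mu$, chosen compatibly with the degree filtration. The coefficients are then $\hbar(\mu)^{|K|-\deg E_i}$ with $|K|-\deg E_i\ge 1$. So $q_\mu(h)=\hbar(\mu)P'$ with $P'$ a $\mathbb{C}[\hbar]$-combination of the $E_i$, and $\hbar(\mu)\,q_\mu(h)=\hbar(\mu)^2P'$ with no degree hypothesis.

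The same comparison with $\deg E_i$ is also the correct justification in your step 2. The terms removed by $R_\mu$ are absorbed into $\hbar^2P$ not because $n_\mu\ge 1$. They are absorbed because they have the form $\hbar(\mu)\,e^{(\mu)}_{(K)}$ with $|K|>n_\mu\ge\deg E_i$. Counting powers of $\hbar$ alone is not enough, since each $E_i$ already carries $\hbar^{\deg E_i}$. In step 2 you should also note that $q_\mu(\{r_f,r_g\})$ only sees the parts of $r_f,r_g$ of degree at most $n_\mu$. This holds because a bracket involving a monomial of degree $>n_\mu$ is either zero or of degree $>n_\mu$.
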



\begin{rem}
The map introduced above,
$
q_{A/I,\mu}: A_\mathfrak{g}/I(C)\to \operatorname{End}(V^\mu),
$
was defined in \cite{Gohara_Sako} as follows. It is easy to see that this definition is equivalent to the one given above \cite{Gohara_phd}.
\\
First, we introduce a quantization map $q_U$ from $A_\mathfrak{g} $ 
 to  enveloping algebra $ \mathcal{U}_\mathfrak{g} $
 by $q_U (x_{\alpha_1} \cdots x_{\alpha_m}) = X_{(\alpha_1, \cdots , \alpha_m )}$.
 Here $X_1,\ldots,X_d$ are the generators of the enveloping algebra $U(\mathfrak g)[\hbar]$. 
Next, we construct a quantization map 
$q_{U/I}  : A_\mathfrak{g}  /I(C) \to  \mathcal{U}_\mathfrak{g} [\hbar]/I(C(X))$
for nontrivial $I(C)$. 
$I(C)$ is not arbitrary, but (\ref{2_13}).
We use $q_\mu$ to obtain the relation (\ref{fixed_casimir_relation}).
Let $G$ be the reduced Gr\"obner basis of $I(C)$.
For any $f(x) \in  {\mathbb C}[x]$ $f(x)= r_f (x) + h_f(x)$ is uniquely determined by $G$, 
where $h_f(x) \in I(C)$ and $r_f(x) \notin I(C)$.
Then we can define $q_{U/I}$ by $q_{U/I}  ([f(x)]) :=  [ q_{U}  ( r_{f, G}(x) ) ]$.
There exists an algebra homomorphism 
$\rho_{U/I , \mu} :  \mathcal{U}_\mathfrak{g}[\hbar] / I(C(X))  \to \operatorname{End}(V^\mu )$.
Using this $\rho_{U/I , \mu}$ and a projection operator
$R_\mu : \operatorname{End} (V^\mu )[\hbar(\mu)]  \to \operatorname{End} (V^\mu ) [\hbar(\mu)] $ 
that restricts the degree of $\hbar$ to $n_\mu$ or less. 
Finally, we get the weak matrix regularization $A_\mathfrak{g}  /I(C) \to  \operatorname{End} (V^\mu )$ by
$q_{A/I ,\mu} : = R_\mu \circ  \rho_{U/I , \mu} \circ q_{U/I} $.
\begin{align*}
\vcenter{
\xymatrix@C=10pt@R=4pt{
 A_\mathfrak{g}  / I(C)  \ar[r]^{ q_{U/I} \in Q}  \ar@/^20pt/[rr]^{ q_{\mu}^{pre} \in Q}
 \ar@/^40pt/[rrr]^{ q_{A/I , \mu} \in Q}& 
 \mathcal{U}_\mathfrak{g}[\hbar ] / I(C(X)) \ar[r]^{\rho_{U/I, \mu}} &
  \operatorname{End}(V^\mu) \ar[r]^{R_\mu} &  \operatorname{End}(V^\mu)\\
   {}&{}&{}&{}\\
\big[f(x)\big]=  \big[r_{f} + h_f \big]   \ar@{|->}[r]& 
[ q_U(r_f ) ]= \big[\sum_I a_I X_{(i_1, \cdots , i_m )} \big]  \ar@{|->}[r] &  
\sum_I a_I e^{(\mu)}_{(i_1 , \cdots , i_m)}
 \ar@{|->}[r] & \!\!\!\!\!
\sum_{|I| = m< n_\mu} \!\!\! a_I e^{(\mu)}_{(i_1 , \cdots , i_m)} 
}}
\end{align*}
This was the original definition of the map $q_{A/I , \mu}$.
\end{rem}

\subsection{Matrix regularization using reducible representations}

\label{rev_sect6.1}
We consider the following decomposition of a reducible representation into a direct sum of irreducible representations of a semisimple Lie algebra $\mathfrak{g}$;
\begin{align*}
V^\mu &=\bigoplus_{a=1}^{m_\mu} V_\mu ^a.
\end{align*}
We define a projection $\hat{P}_a:V^\mu\to V_\mu^a$, and a reducible representation $\rho^\mu$ of $\mathfrak{g}$
\begin{align*}
\rho^\mu &= \sum_a^{m_\mu} \rho_{\mu ,a}\hat{P}_a,\\
\rho_{\mu,a}&:\mathfrak{g}\to End(V_\mu ^a).
\end{align*}
Suppose that each representation satisfies $[\rho_{\mu,a}(e_i),\rho_{\mu,a}(e_j)]=f^k_{ij}\rho_{\mu,a}(e_k)$. Then
\begin{align*}
[\rho ^\mu(e_i),\rho^\mu(e_j)]&=\sum_{a,b}[\rho_{\mu,a}(e_i)\hat{P}_a,\rho_{\mu,b}(e_j)\hat{P}_b]
=f^k_{ij}\rho^\mu(e_k).
\end{align*}
Let us assume that $\hbar(\mu)$ is given for each representation
\begin{align*}
\hat{\hbar}(\mu) = \hbar_1(\mu )Id_1\oplus \cdots \oplus \hbar_{m_\mu}(\mu)Id_{m_\mu} = \hbar(\mu ) 
( {\mathrm r}_1(\mu )Id_1\oplus \cdots \oplus {\mathrm r}_{m_\mu}(\mu)Id_{m_\mu} ) 
,
\end{align*}
where $Id_a:= Id_{V_{\mu}^a}~ (a= 1, \dots , m_\mu)$, $\hbar_a(\mu) \in {\mathbb R}~ (a= 1, \dots , m_\mu)$, 
and ${\mathrm r}_a(\mu) \in {\mathbb R}~ (a= 1, \dots , m_\mu)$
satisfies $\hbar(\mu) {\mathrm r}_a(\mu) = \hbar_a(\mu) $. 
Here, the direct sum of algebras is represented 
by block-diagonal matrices. 
Following this convention, we shall also describe direct products of non-matrix algebras using the direct sum notation in what follows.
We also define each basis as follows
\begin{align*}
e^{(\mu)}_i&:=\hat{\hbar}(\mu)\rho^\mu(e_i)=\hbar_1(\mu)\rho_{\mu,1}(e_i)\oplus \cdots \oplus \hbar_{m_\mu}(\mu)\rho_{\mu,m_\mu}(e_i),\\
e^{(\mu,a)}_i&:=\hbar_a(\mu)\rho_{\mu,a}(e_i).
\end{align*}
From the commutation relations of the representation $\rho_{\mu,a}$,
\begin{align*}
[e^{(\mu,a)}_i,e^{(\mu,a)}_j]&=\hbar_a(\mu)f^k_{ij}e^{(\mu,a)}_k,\\
[e_i^{(\mu)},e_j^{(\mu)}]&=\hat{\hbar}(\mu)f^k_{ij}e^{(\mu)}_k.
\end{align*}
For $T_\mu^a:=\langle e^{(\mu,a)}\rangle$, we redefine $T_\mu:=T^1_\mu\oplus \cdots \oplus T^{m_\mu}_\mu$ as an $R$-algebra generated by $e^{(\mu)}$. Given the basis $E_i^a$ $(i=1,\cdots,D_a)$ of $T_\mu^a$, we set $n_\mu^a:=\max(\deg E_1^a,\cdots, \deg E_{D_a}^a)$, and proceed to redefine the quantization map for the reducible representations.
\begin{definition}\label{def6_1}
We define a linear map $q_\mu^a:A_{\mathfrak{g}}\to T_\mu^a$ by
\begin{align*}
q^a_\mu \Big(\sum_I f_Ix^I \Big)=\sum_{|I|\le n^a_\mu}f_I e^{(\mu,a)}_{(I)},
\end{align*}
and a linear map $q^R_\mu:A_\mathfrak{g}\to T_\mu$ by
\begin{align*}
q^R_\mu \Big( \sum_I f_Ix^I \Big)=\sum_{a=1}^{m_\mu}\sum_{|I|\le n_\mu^a}f_Ie_{(I)}^{(\mu,a)}\hat{P}_a=\sum_{a=1}^{m_\mu} q^a_\mu \Big(\sum_If_Ix^I \Big)\hat{P}_a,
\end{align*}
i.e. $q^R_\mu:=q^1_\mu\oplus \cdots \oplus q^{m_\mu}_\mu$.
\end{definition}
Under these settings, results analogous to similar propositions and theorems stated in subsection \ref{sect2_2} can be derived as follows.

\begin{theorem}
Let $\{ q_\mu^R:A_\mathfrak{g}\to T_\mu \}$ be a sequence of quantizations defined by Definition \ref{def6_1}. 
Suppose that $\{\hbar(\mu)\}$ is a sequence such that $\max_{1\le a \le m_\mu} |\hbar_a(\mu)| \to 0 $ as $\dim V_{\mu}^a \to \infty$. Then $q_\mu^R$ is a weak matrix regularization. In other words,
\begin{align*}
[q^R_\mu(f),q^R_\mu(g)]=\hat{\hbar}(\mu)q^R_\mu(\{f,g\})+\hat{\hbar}^2(\mu)P,
\end{align*}
where $P=\sum_{a=1}^{m_\mu}\sum_la_{i_1\cdots i_l}^{\mu,a}q^a_\mu(x_{i_1})\cdots q^a_{\mu}(x_{i_l})\hat{P}_a$.
\end{theorem}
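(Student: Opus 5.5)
The plan is to reduce the statement to the irreducible case, one block at a time, and then reassemble the blocks using the projections $\hat{P}_a$. The map $q^R_\mu$ is block diagonal, $q^R_\mu=\sum_a q^a_\mu(\cdot)\hat{P}_a$. The projections satisfy
\[
\hat{P}_a\hat{P}_b=\delta_{ab}\hat{P}_a ,
\]
and each $\hat{P}_a$ commutes with every element of $T^b_\mu$ embedded block-diagonally. Hence
\[
[q^R_\mu(f),q^R_\mu(g)]=\sum_{a=1}^{m_\mu}[q^a_\mu(f),q^a_\mu(g)]\hat{P}_a .
\]
The cross terms with $a\neq b$ vanish. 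So it suffices to establish, for each fixed $a$, the single-block identity
\[
[q^a_\mu(f),q^a_\mu(g)]=\hbar_a(\mu)\,q^a_\mu(\{f,g\})+\hbar_a(\mu)^2P_a ,
\]
with $P_a$ a polynomial in the $q^a_\mu(x_i)=e^{(\mu,a)}_i$.

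For a fixed block, $q^a_\mu$ is exactly the map $q_\mu$ of Definition~\ref{def_FuzzyQ}, built from the irreducible representation $\rho_{\mu,a}$ with deformation parameter $\hbar_a(\mu)$. The generators obey $[e^{(\mu,a)}_i,e^{(\mu,a)}_j]=\hbar_a(\mu)f^k_{ij}e^{(\mu,a)}_k$. I would therefore invoke the weak matrix regularization property of the irreducible construction, i.e.\ the theorem for $q_{A/I,\mu}$ with $I(C)=\{0\}$, where $q_{A/I,\mu}=q_\mu$, replacing $\hbar(\mu)$ by $\hbar_a(\mu)$.

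If a self-contained argument is preferred, the single-block identity can be checked directly. By bilinearity it reduces to monomials $x^I$ and $x^J$. Write the commutator of symmetrized products $e^{(\mu,a)}_{(I)}$ and $e^{(\mu,a)}_{(J)}$ using the Leibniz rule for commutators, giving a sum of terms $\hbar_a f^k_{i_s j_t}(\cdots)e^{(\mu,a)}_k(\cdots)$. Then reorder each such product into symmetrized form. Every reordering costs one further commutator, hence one more factor of $\hbar_a$. The leading symmetrized piece is precisely $\hbar_a q^a_\mu(\{x^I,x^J\})$, since $\{x^I,x^J\}=\sum_{s,t}f^k_{i_sj_t}x_k\,\partial\cdots$. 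The remainder is $\hbar_a^2$ times a polynomial in the generators.

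The truncation $R_\mu$ (degree $\le n^a_\mu$) is harmless. Any product of generators in $T^a_\mu$ can be rewritten with degree at most $n^a_\mu$. The discarded high-degree terms carry a power $\hbar_a^{n^a_\mu+1}$, so they can be absorbed into $\hbar_a^2P_a$ once $n^a_\mu\ge1$.

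Assembling, take $P=\sum_a P_a\hat{P}_a$. The diagonal identities
\[
\hat{\hbar}(\mu)\hat{P}_a=\hbar_a(\mu)\hat{P}_a, \qquad \hat{\hbar}^2(\mu)\hat{P}_a=\hbar_a^2(\mu)\hat{P}_a
\]
give
\[
[q^R_\mu(f),q^R_\mu(g)]=\hat{\hbar}(\mu)q^R_\mu(\{f,g\})+\hat{\hbar}^2(\mu)P ,
\]
with $P$ of the stated form.

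The hypothesis $\max_a|\hbar_a(\mu)|\to0$ is where the one real subtlety lies: it must be uniform in $a$. It ensures that $\hat{\hbar}(\mu)\to0$ in operator norm, so the error term is genuinely of higher order and the sequence qualifies as a weak matrix regularization in the sense of \cite{Gohara_Sako}. The main obstacle I anticipate is bookkeeping rather than conceptual. One must check that the coefficients $a^{\mu,a}_{i_1\cdots i_l}$ of $P_a$ come from universal reordering identities. They then depend only on $f$, $g$ and the structure constants, and do not blow up across blocks of different dimension. I would verify this by noting that they come from the universal identity in $U(\mathfrak{g})[\hbar]$ before applying the representation.
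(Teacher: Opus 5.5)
Your proposal is correct and follows the route the paper intends. The paper gives no separate proof here; it presents the result as the blockwise analogue of the irreducible-case theorem from \cite{Gohara_Sako}, which is exactly your argument of applying that theorem to each $q^a_\mu$ with parameter $\hbar_a(\mu)$ and reassembling via $\hat{P}_a\hat{P}_b=\delta_{ab}\hat{P}_a$ and the block-scalar $\hat{\hbar}(\mu)$. One minor overstatement: when the truncation at degree $n^a_\mu$ is active, rewriting the discarded terms uses representation-dependent relations, so the coefficients $a^{\mu,a}$ need not be universal, but the statement explicitly allows them to depend on $\mu$ and $a$, so this does not affect the proof.
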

\begin{proposition}
For any $f,g\in A_{\mathfrak{g}}$ with $\deg f+\deg g \le \min(n_\mu^1,\cdots,n_\mu^{m_\mu})$, there exists 
$\displaystyle P=\sum_{a=1}^{m_\mu}\sum_{i=1}^{D_a} c_i^a(\hbar_a(\mu))E^a_i\hat{P}_a \in T_\mu$ with $c_i^a(\hbar_a(\mu))\in \mathbb{C}[\hbar_a(\mu)]$ satisfying
\begin{align*}
q_\mu^R(f)q^R_\mu(g)=q^R_\mu(fg)+\hat{\hbar}(\mu)P.
\end{align*}
\end{proposition}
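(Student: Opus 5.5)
The plan is to reduce the statement blockwise to the irreducible case, namely the asymptotic homomorphism property (\ref{asym_hom_mu}) of Section~\ref{sec2_1}, and then reassemble the blocks using the projections $\hat{P}_a$. Two features of Definition~\ref{def6_1} make this possible. First, $q^R_\mu = q^1_\mu\oplus\cdots\oplus q^{m_\mu}_\mu$ is block diagonal. Second, the projections satisfy $\hat{P}_a\hat{P}_b=\delta_{ab}\hat{P}_a$ and commute with each $e^{(\mu,a)}_i$, which acts on $V^a_\mu$ only. Hence
\begin{align*}
q^R_\mu(f)\,q^R_\mu(g)=\sum_{a,b} q^a_\mu(f)\hat{P}_a\, q^b_\mu(g)\hat{P}_b=\sum_{a=1}^{m_\mu} q^a_\mu(f)\,q^a_\mu(g)\,\hat{P}_a .
\end{align*}
So it suffices to treat each block $a$ separately.

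For a fixed $a$, the map $q^a_\mu$ is exactly the map $q_\mu$ of Definition~\ref{def_FuzzyQ}, built from the irreducible representation $\rho_{\mu,a}$ with parameter $\hbar_a(\mu)$ and truncation degree $n^a_\mu$. The hypothesis $\deg f+\deg g\le \min_a n^a_\mu$ gives $\deg f+\deg g\le n^a_\mu$ for every $a$. In particular no monomial of $f$, $g$ or $fg$ is removed by the truncation in $q^a_\mu$. Applying (\ref{asym_hom_mu}) to this block gives
\begin{align*}
q^a_\mu(f)\,q^a_\mu(g)=q^a_\mu(fg)+\hbar_a(\mu)P_a,\qquad P_a=\sum_{i=1}^{D_a}c^a_i(\hbar_a(\mu))E^a_i ,
\end{align*}
with $c^a_i\in\mathbb{C}[\hbar_a(\mu)]$. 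The underlying reason for (\ref{asym_hom_mu}) should be recorded here, since the whole argument rests on it. The product $e^{(\mu,a)}_{(I)}e^{(\mu,a)}_{(J)}$ differs from the fully symmetrized $e^{(\mu,a)}_{(IJ)}$ by reorderings. Each transposition of adjacent generators produces a commutator $\hbar_a(\mu)f^k_{ij}e^{(\mu,a)}_k$, so every correction carries at least one factor of $\hbar_a(\mu)$, and the remaining factor is a polynomial in the generators that can be expanded in the basis $E^a_i$.

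Summing over $a$ gives
\begin{align*}
q^R_\mu(f)q^R_\mu(g)=\sum_a q^a_\mu(fg)\hat{P}_a+\sum_a\hbar_a(\mu)P_a\hat{P}_a = q^R_\mu(fg)+\hat{\hbar}(\mu)\sum_a P_a\hat{P}_a .
\end{align*}
The last equality uses $\hat{\hbar}(\mu)\hat{P}_a=\hbar_a(\mu)\hat{P}_a$, which holds because $\hat{\hbar}(\mu)$ is scalar on each block. Setting $P=\sum_a\sum_i c^a_i(\hbar_a(\mu))E^a_i\hat{P}_a$ then gives exactly the stated form.

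The main obstacle is bookkeeping rather than anything deep. One must check that the degree hypothesis in the irreducible statement (\ref{asym_hom_mu}) is met uniformly in $a$, which is exactly why the minimum over $n^a_\mu$ appears in the statement. One must also make sure that each $P_a$ is expanded in the basis $E^a_i$ of $T^a_\mu$ with coefficients polynomial in $\hbar_a$, and not in the global $\hbar(\mu)$. Because the blocks are orthogonal, no cross terms appear, and the rest is routine.
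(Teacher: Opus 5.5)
Your proposal is correct and takes the same route the paper intends. The paper gives no separate proof and only says the result follows analogously from the irreducible case; concretely, that means applying (\ref{asym_hom_mu}) to each block $q^a_\mu$, where $\deg f+\deg g\le n^a_\mu$ holds for every $a$, and reassembling the blocks via $\hat{P}_a\hat{P}_b=\delta_{ab}\hat{P}_a$ and $\hat{\hbar}(\mu)\hat{P}_a=\hbar_a(\mu)\hat{P}_a$, with coefficients $c^a_i$ that are polynomial in the block-dependent $\hbar_a(\mu)$, exactly as you do.
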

\bigskip

Let us introduce Casimir operators $C_{k}(V_\mu^a)Id_{\mu,a} \ (a= 1, \dots , m_\mu )$ of each $V_\mu^a$ such that $|C_k(V_\mu^a)|\to \infty $ when $\dim V_\mu^a\to \infty$. For a fixed $d_k$-degree homogeneous Casimir polynomial $C^k(x)=\sum_J C^k_{J}x^J$ in $A_\mathfrak{g}$, we defined $C^{a}_k(e^{(\mu,a)})$ by
\begin{align*}
C^{a}_k(e^{(\mu,a)})&:=q^a_\mu(C^k(x))\\
&=\frac{(\hbar_a(\mu))^{d_k}}{d_k!}\sum _J C_{J}^k \sum_{\sigma \in S_{d_k}}\rho_{\mu,a}(e_{j_{\sigma (1)}})\cdots \rho_{\mu,a}(e_{j_{\sigma (d_k)}})\\
&=(\hbar_a(\mu))^{d_k} C_{k}(V_\mu^a)Id_{\mu,a}.
\end{align*}

As in the case of irreducible representations, we fix the parameter 
$\lambda_k\in \mathbb{C}$ determining the ideal, and choose each $\hbar_a(\mu) $  so as to satisfy the following condition
\begin{align*}
q^a_\mu(C^k(x))=
(\hbar_a(\mu))^{d_k} C_{k}(V_\mu^a)Id_{\mu,a}=\lambda_k Id_{\mu,a},
\end{align*}
for any $q_\mu^a:A_\mathfrak{g}\to \operatorname{End} V^\mu_a (a= 1, \dots , m_\mu )$. 
$\lambda_k\in \mathbb{C}$ is the eigenvalue of the matrix $C_k^{a}(e^{(\mu,a)})$.
Then, we obtain
\begin{align*}
q^R_\mu(C^k(x))&=\sum_{a=1}^{m_\mu} q^a_\mu(C^k(x))\hat{P}_a
=\lambda_k Id_{\mu}.
\end{align*}


\begin{definition}\label{def_2_10}
We apply (\ref{q_AImu_general}) to each irreducible component $V_\mu^a$. Namely, we introduce
\[
q_{A/I,\mu}^a : A_\mathfrak{g}/I(C) \rightarrow gl(V_\mu^a)
\]
by
\begin{align}
q_{A/I,\mu}^a = q_\mu^a \circ {\mathcal R}_G .
\label{q_AImu_general2}
\end{align}
Using these maps, we define the quantization associated with the reducible representation as the block-diagonal map
 $q_{A/I, \mu}^R : A_\mathfrak{g} / I(C) \rightarrow \operatorname{End}(V^\mu),$ given by
 \begin{align}
q_{A/I, \mu}^R := q_{A/I, \mu}^1 \oplus \cdots \oplus q_{A/I, \mu}^{m_\mu} .
\end{align}
\end{definition}

Finally, we obtain the following.
\begin{theorem}
$q_{A/I, \mu}^R $
is a weak matrix regularization, i.e., for  any $[f], [g] \in A_\mathfrak{g} / I(C)$, 
there exists $\displaystyle P = \sum_{a=1}^{m_\mu} \sum_i^{D_a} c_i^a(\hbar_a(\mu)) E_i^a \hat{P}_a \in T_\mu$,
where each $c_i^a(\hbar_a(\mu)) $ is a polynomial in $\hbar_a(\mu)$, such that
\begin{align}
[ q_{A/I, \mu}^R ( [ f ] ) ,  q_{A/I, \mu}^R ( [ g ] ) ] = \hat{\hbar}(\mu ) q_{A/I, \mu}^R ( \{ [f] , [g] \} ) + (\hat{\hbar}(\mu))^2 P.
\label{q_AI_R_quantization}
\end{align}
Furthermore, 
if $\displaystyle \deg r_f +\deg r_g \le \min_{1\le a \le m_\mu} \{n_\mu^a\}$ and $\displaystyle \deg r_{fg} \le  \min_{1\le a \le m_\mu} \{n_\mu^a\}$,
$q_{A/I, \mu}^R $ also satisfies 
$$
q_{A/I, \mu}^R ( [ f ] )  q_{A/I, \mu}^R ( [ g ] ) = q_{A/I, \mu}^R ( [f] [g]) + \hat{\hbar}(\mu)  P 
$$
with some $\displaystyle P = \sum_{a=1}^{m_\mu} \sum_i^{D_a} c_i^a(\hbar_a(\mu)) E_i^a \hat{P}_a  \in T_\mu$.
\end{theorem}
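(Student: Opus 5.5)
The plan is to reduce everything to the irreducible case blockwise. By Definition \ref{def_2_10}, $q^R_{A/I,\mu}=\bigoplus_a q^a_{A/I,\mu}$, and every operator in sight is block diagonal with respect to $V^\mu=\bigoplus_a V^a_\mu$. In particular $\hat P_a\hat P_b=\delta_{ab}\hat P_a$, and $\hat P_a$ commutes with $\hat{\hbar}(\mu)$ and with every $e^{(\mu)}_i$. Hence
\[
[q^R_{A/I,\mu}([f]),q^R_{A/I,\mu}([g])]=\sum_{a=1}^{m_\mu}[q^a_{A/I,\mu}([f]),q^a_{A/I,\mu}([g])]\hat P_a ,
\]
and likewise for products. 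It therefore suffices to prove both identities on each block $V^a_\mu$ with $\hbar_a(\mu)$ in place of $\hbar(\mu)$, and then reassemble.

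For the commutator identity, fix $a$. The map $q^a_{A/I,\mu}=q^a_\mu\circ\mathcal R_G$ is exactly the map (\ref{q_AImu_general}) built from the irreducible representation $\rho_{\mu,a}$, the parameter $\hbar_a(\mu)$ and the truncation degree $n^a_\mu$. Moreover, $\hbar_a(\mu)$ was chosen so that $q^a_\mu(C^k)=\lambda_k Id_{\mu,a}$, which is the normalization (\ref{fixed_casimir_relation}) used for the irreducible theorem. The earlier theorem (\ref{q_AI_quantization}) therefore applies verbatim to this block. It gives
\[
[q^a_{A/I,\mu}([f]),q^a_{A/I,\mu}([g])]=\hbar_a(\mu)\,q^a_{A/I,\mu}(\{[f],[g]\})+\hbar_a(\mu)^2P_a,
\]
with $P_a=\sum_i c^a_i(\hbar_a(\mu))E^a_i$.

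Summing over $a$ with the projections, I use $\hbar_a(\mu)\hat P_a=\hat{\hbar}(\mu)\hat P_a$ and $\hbar_a(\mu)^2\hat P_a=\hat{\hbar}(\mu)^2\hat P_a$. This yields (\ref{q_AI_R_quantization}) with $P=\sum_a P_a\hat P_a$, which has the stated form. The Poisson bracket on $A_\mathfrak g/I(C)$ is the same for every block, so $q^R_{A/I,\mu}(\{[f],[g]\})=\sum_a q^a_{A/I,\mu}(\{[f],[g]\})\hat P_a$ is consistent.

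For the product identity, assume $\deg r_f+\deg r_g\le\min_a n^a_\mu$ and $\deg r_{fg}\le\min_a n^a_\mu$. Then these hypotheses hold for every individual $n^a_\mu$. Theorem \ref{prop4_12} applied to block $a$ gives
\[
q^a_{A/I,\mu}([f])\,q^a_{A/I,\mu}([g])=q^a_{A/I,\mu}([f][g])+\hbar_a(\mu)P_a .
\]
Multiplying by $\hat P_a$, summing over $a$, and using $\hat P_a\hat P_b=\delta_{ab}\hat P_a$ gives the claim with $\hat{\hbar}(\mu)P$.

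The only point needing care, and the step I expect to be the main obstacle, is the bookkeeping of $\hbar$-degrees. The remainder $P_a$ is a polynomial in $\hbar_a(\mu)$ alone, since on block $a$ every generator $e^{(\mu,a)}_i$ carries exactly $\hbar_a(\mu)$. The truncation $R$ is applied blockwise with $n^a_\mu$, so no mixing between different $\hbar_b$ occurs. Finally, factoring $\hat{\hbar}$ out of the block sum is legitimate precisely because $\hat{\hbar}$ is block scalar.
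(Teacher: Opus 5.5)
Your proposal is correct and follows essentially the same route as the paper. The paper does not write out a separate proof; it obtains this theorem by applying the irreducible-case results, namely the commutator identity (\ref{q_AI_quantization}) and Theorem~\ref{prop4_12}, to each block $V_\mu^a$ with $\hbar_a(\mu)$ and $n_\mu^a$, and then reassembling with the projections $\hat P_a$, with your observation that $\hat{\hbar}(\mu)$ is scalar on each block (so $\sum_a \hbar_a(\mu)^m X_a\hat P_a=\hat{\hbar}(\mu)^m\sum_a X_a\hat P_a$) being exactly what makes that reassembly legitimate.
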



\section{Irreducible representation and classical limit }\label{sect3}
The quantization maps $q_\mu$ and $q_{A/I,\mu}$ reviewed in Subsection \ref{sect2_2}, which are associated with irreducible representations, are generally not quantizations of coadjoint orbits.
This means that the source of $q_\mu$ or $q_{A/I,\mu}$
is not a coadjoint orbit, in general.
However, in the classical limit $\hbar\to 0$, which corresponds to the large-$N$ limit of the matrix model, matrix regularizations based on irreducible representations generally converge to coadjoint orbits in the sense that the coimages of the quantization maps approach the algebras of functions on the corresponding coadjoint orbits.
We now explain this point.

\subsection{Classical Limits of the Eigenvalues of Casimir Operators and Coadjoint Orbits}
Let $\mathfrak{g}$ be a $d$-dimensional Lie algebra of a compact and connected Lie group.
Thus, we consider the case where the Lie algebra is semisimple or reductive \cite{GHV}. In this case, the Casimir polynomials, which are invariant polynomials of the Poisson algebra $(A_{\mathfrak{g}},\{\ ,\ \})$, are described by $r=\operatorname{rank}\mathfrak{g}$ algebraically independent homogeneous polynomials.

We denote these Casimir polynomials by
\begin{align}
C^1(x),\ldots,C^r(x),
\end{align}
and choose $C^i(x)$ to be a homogeneous polynomial of degree $d_i$. We consider the case where the ideal is generated by a single polynomial $C^k(x)-\lambda_k$:
$$
I(C):=(C^k(x)-\lambda_k).
$$
Here $\lambda_k $ is a real number such that $\{ x~ |~ C^k(x)-\lambda_k=0\} \neq 0$. 
For an ideal generated by a single polynomial, the Gr\"{o}bner basis is the polynomial itself. In what follows, we restrict ourselves to such cases.

For each representation space $V^\mu$, the parameter $\hbar(\mu)$ is fixed by imposing, for a fixed $k$, the condition
\begin{align}
q_\mu(C^k(x))
= (\hbar(\mu))^{d_k} C_k(\mu)\operatorname{Id}_{V^\mu}
= \lambda_k \operatorname{Id}_{V^\mu}.
\end{align}
Here $C_k(\mu)$ is the eigenvalue of the Casimir operator determined by the representation space $V^\mu$. It differs from the eigenvalue of $q_\mu(C^k(x))$ by the factor $(\hbar(\mu))^{d_k}$. The label $\mu$ distinguishing the representation spaces may be taken, for example, to be the highest weight.
\begin{proposition}
\label{Syusoku1}
Let $q_{\mu_j} \ (j=1,2, \dots )$ denote the matrix regularization constructed by the method described in Subsection~\ref{sect2_2} from the following sequence of representations and $\hbar(\mu_j)$.
Let $\{\mu_j\}_{j=1}^{\infty}$ be a sequence of highest weights of representations of a semisimple Lie algebra, written as
$ \displaystyle
\mu_j=\sum_{i=1}^r m_i(j)\omega_i,
\
m_i(j)\in {\mathbb Z}_{\ge 0}.
$
Assume that
$\displaystyle
\lim_{j\to\infty}\|\mu_j\|=\infty
$
and that
\[
\lim_{j\to\infty}\frac{\mu_j}{\|\mu_j\|}
=\mu_\infty
=
\sum_{i=1}^r m_i\omega_i,
\qquad
m_i\in{\mathbb R}_{\ge 0}.
\]
Suppose that $\hbar(\mu_j)$ is chosen by fixing one Casimir polynomial $C^k(x)$ of degree $d_k$, namely
$\displaystyle
\bigl(\hbar(\mu_j)\bigr)^{d_k}C_k(\mu_j)=\lambda_k,
$
where $\lambda_k$ is a fixed nonzero real number. Assume moreover that the leading homogeneous part of $C_k(\mu_j)$ does not vanish in the limiting direction $\mu_\infty$. Then, for each $n=1,\ldots,r$, there exists a real number $\lambda_n$ such that
\begin{align}
\lambda_n
=
\lim_{j\to\infty}
\bigl(\hbar(\mu_j)\bigr)^{d_n}C_n(\mu_j).
\end{align}
\end{proposition}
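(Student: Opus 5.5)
The idea is to use the fact that each Casimir eigenvalue is a polynomial in the highest weight whose top-degree part has degree equal to the degree of the Casimir. By Harish-Chandra's theorem, the eigenvalue of the Casimir operator associated with $C^n(x)$ on the irreducible representation of highest weight $\mu$ is
\[
C_n(\mu)=P_n(\mu+\delta),
\]
where $P_n$ is a Weyl-invariant polynomial on $\mathfrak{h}^*$ of degree $d_n$ and $\delta$ is the half-sum of positive roots. Here $q_\mu(C^n(x))$ is the symmetrized image, so the symmetrization map is used. Expanding in $\mu$, we get
\[
C_n(\mu)=c_n(\mu)+R_n(\mu),
\]
where $c_n$ is homogeneous of degree $d_n$ and $R_n$ has degree at most $d_n-1$. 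In fact $c_n$ is, up to the identification $\mathfrak{g}\simeq\mathfrak{g}^*$, the restriction of the classical polynomial $C^n(x)$ to $\mathfrak{h}^*$. I would justify this by the standard fact that the top symbol of the symmetrized Casimir is $C^n$ itself, so the leading homogeneous part of the eigenvalue is $C^n$ evaluated at the highest weight.

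Next I write $\mu_j=\|\mu_j\|\,\nu_j$ with $\nu_j\to\mu_\infty$. Then
\[
C_n(\mu_j)=\|\mu_j\|^{d_n}\bigl(c_n(\nu_j)+O(\|\mu_j\|^{-1})\bigr),
\]
and the error bound is uniform because $\nu_j$ lies on the unit sphere and $R_n$ has lower degree. The defining condition $\hbar(\mu_j)^{d_k}C_k(\mu_j)=\lambda_k$ then gives
\[
\hbar(\mu_j)^{d_k}\|\mu_j\|^{d_k}=\frac{\lambda_k}{c_k(\nu_j)+O(\|\mu_j\|^{-1})}.
\]
This is where the hypothesis $c_k(\mu_\infty)\neq0$ is used: the denominator tends to $c_k(\mu_\infty)\ne0$. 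The ratio must be positive for $\hbar$ real positive, which holds for all large $j$ since $\lambda_k$ is given and a real eigenvalue choice exists. Hence
\[
\hbar(\mu_j)\|\mu_j\|\to s:=\bigl(\lambda_k/c_k(\mu_\infty)\bigr)^{1/d_k}>0,
\]
and in particular $\hbar(\mu_j)\to0$.

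Finally, for each $n$,
\[
\hbar(\mu_j)^{d_n}C_n(\mu_j)=\bigl(\hbar(\mu_j)\|\mu_j\|\bigr)^{d_n}\bigl(c_n(\nu_j)+O(\|\mu_j\|^{-1})\bigr)\to s^{d_n}c_n(\mu_\infty)=:\lambda_n.
\]
This limit is real because $c_n$ has real coefficients on the real weight lattice span, after choosing the basis so that the Casimirs are real. It equals $C^n$ evaluated at the point $s\mu_\infty$ of $\mathfrak{h}_+$, which also anticipates the coadjoint-orbit interpretation. For $n=k$ the formula consistently returns $\lambda_k$.

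The main obstacle is the first step: rigorously identifying the leading homogeneous part of $C_n(\mu)$ with the classical polynomial $C^n$ restricted to the Cartan subalgebra, with degree exactly $d_n$. The proof needs only that $C_n(\mu)$ is a polynomial of degree at most $d_n$ in $\mu$, and this follows directly from Harish-Chandra's isomorphism. Existence of the limit for every $n$ then follows, with $c_n$ defined as the degree-$d_n$ part, possibly zero. So I would prove only that weaker polynomial-degree statement and remark on the identification separately.
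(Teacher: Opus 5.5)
Your proposal is correct and follows essentially the same route as the paper. Both arguments write $C_n(\mu)=P_{C_n}(\mu+\rho)$ via Harish-Chandra and rescale by $\|\mu_j\|^{-d_n}$ to isolate the top homogeneous part. The nonvanishing of $P^{[d_k]}_{C_k}(\mu_\infty)$ then forces $\hbar(\mu_j)\|\mu_j\|\to\alpha$, which gives $\lambda_n=\alpha^{d_n}P^{[d_n]}_{C_n}(\mu_\infty)$. One small slip: you claim $s>0$, but for odd $d_k$ with $\lambda_k/c_k(\mu_\infty)<0$ the real root is negative; the limit still exists, so the argument is unaffected.
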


\begin{proof}
By the Weyl dimension formula, if we denote the set of positive roots by $\Phi^+$, then the dimension of the representation space is given by
\begin{align*}
\dim V^\mu
=
\prod_{\alpha \in \Phi^+}
\frac{(\mu+\rho,\alpha)}{(\rho,\alpha)} .
\end{align*}
Here $\rho$ is the half-sum of the positive roots. Therefore, if we take a sequence of representations whose highest weights diverge to infinity, then the dimensions of the representation spaces
$V^{\mu_1},V^{\mu_2},V^{\mu_3},\ldots$
also diverge, namely
\begin{align*}
\lim_{j\to\infty}\dim V^{\mu_j}=\infty .
\end{align*}
Thus, this sequence of irreducible representations realizes a weak
matrix regularization.
If the limiting direction of $\mu_j$ is regular, then one has the asymptotic behavior
\begin{align*}
\dim V^{\mu_j}\sim \|\mu_j\|^{\sharp \Phi^+}.
\end{align*}
In other words, when all $m_i$ are rescaled as $m_i (j) \mapsto n m_i (j)$, the highest-degree term is multiplied by $n^{\sharp \Phi^+}$.

Let $\mathfrak h$ be a Cartan subalgebra of $\mathfrak g$. 
By the Harish-Chandra isomorphism, each central element
$z\in Z(U(\mathfrak g))$
corresponds to a Weyl-invariant polynomial $P_z$ on $\mathfrak h^*$. The corresponding central character on the irreducible highest-weight module $V^\mu$ is obtained by evaluating this polynomial at the shifted weight $\mu+\rho$:
\begin{align}\label{HC}
\chi_\mu(z)=P_z(\mu+\rho).
\end{align}
Here $\rho$ is the half-sum of the positive roots \cite{Humphreys1972}.
In the present case, the eigenvalue of the Casimir operator associated with $C^n(x)$ is given by
\begin{align}
C_n(\mu_j)
=
\chi_{\mu_j}\bigl(q_U(C^n(x))\bigr).
\end{align}
Equivalently, the operator $q_{\mu_j}(C^n(x))$ acts on $V^{\mu_j}$ by the scalar
$C_n(\mu_j)$, up to the scaling by the appropriate power of $\hbar(\mu_j)$.
Here we define the canonical linear map
\[
q_U:A_\mathfrak{g}\to U(\mathfrak g)[\hbar]
\]
by symmetrization:
\begin{align}
x_{i_1}\cdots x_{i_m}
\mapsto
\frac{1}{m!}
\sum_{\sigma\in {Sym}(m)}
X_{i_{\sigma(1)}}\cdots X_{i_{\sigma(m)}} ,
\notag
\end{align}
where $X_1,\dots,X_d$ are the generators of the enveloping algebra $U(\mathfrak g)[\hbar]$. For details, see \cite{Gohara_Sako}.

Since $C_n(\mu_j)$ is the eigenvalue associated with a homogeneous polynomial of degree $d_n$, under the rescaling of the basis
$
\rho^\mu(e_i)\mapsto e_i^{(\mu)}
=
\hbar(\mu)\rho^\mu(e_i),
$
the eigenvalue is rescaled as
$
C_n(\mu_j)
\mapsto
\bigl(\hbar(\mu_j)\bigr)^{d_n}C_n(\mu_j).
$
Let us denote
$\displaystyle
\lim_{j \rightarrow \infty}\frac{\mu_j+\rho}{\| \mu_j \|}
$ by $\mu_\infty$.
(\ref{HC}) gives
\begin{align}
\| \mu_j \|^{-d_n}C_n(\mu_j)
&=
\| \mu_j \|^{-d_n}P_{C_n}(\mu_j+\rho)
\rightarrow
P_{C_n}^{[d_n]}(\mu_\infty),
\label{scaled_Casimir_limit}
\end{align}
where $P_{C_n}^{[d_n]}$ denotes the highest homogeneous part of $P_{C_n}$. 
For the Casimir polynomial $C^k(x)$ used to determine $\hbar(\mu_j)$, the
assumption
$
P_{C_k}^{[d_k]}(\mu_\infty)\neq0
$
and the condition
$
\bigl(\hbar(\mu_j)\bigr)^{d_k}C_k(\mu_j)=\lambda_k
$
imply that
\begin{align}
\lim_{j \rightarrow \infty} \hbar(\mu_j)\| \mu_j \| = \alpha,
\end{align}
where $\alpha$ is determined by
\begin{align}
\alpha^{d_k}P_{C_k}^{[d_k]}(\mu_\infty)=\lambda_k.
\end{align}
Consequently, for every $n=1,\ldots,r$,
\begin{align}
\bigl(\hbar(\mu_j)\bigr)^{d_n}C_n(\mu_j)
&=
\bigl(\hbar(\mu_j)\| \mu_j \|\bigr)^{d_n}
\| \mu_j \|^{-d_n}C_n(\mu_j)
\rightarrow
\alpha^{d_n}P_{C_n}^{[d_n]}(\mu_\infty) = \lambda_n.
\end{align}
(In particular, if
$
P_{C_n}^{[d_n]}(\mu_\infty)=0,
$
then the corresponding limiting Casimir value is zero.)

\end{proof}

\begin{rem}
The map
$
q_{A/I,\mu_n}:A_\mathfrak{g}/I(C)\to \operatorname{End}(V^{\mu_n})
$
is defined on the quotient by the single equation
$
I(C):=(C^k(x)-\lambda_k).
$
Therefore, at the level of its defining algebraic variety, it is not a quantization of the common level set of all Casimir polynomials.

Indeed, for another Casimir polynomial $C^i(x)$, the remainder obtained by division by the Gr\"{o}bner basis of the ideal $I(C)$ is not, in general, a Casimir polynomial. Thus this construction does not directly define a quantization of the common level set
\begin{align}
C^1(x)=\lambda_1,\quad \ldots,\quad C^r(x)=\lambda_r,
\end{align}
which corresponds to a coadjoint orbit.

Nevertheless, Proposition \ref{Syusoku1} implies that
it corresponds to a coadjoint orbit in the classical limit.
The representation $V^{\mu_j}$ is irreducible, and the corresponding map is not faithful. Its kernel imposes additional relations depending on the representation. 
In the large-$N$ limit, or equivalently in the limit $\hbar(\mu_j)\to 0$, these additional relations fix the limiting values of the remaining Casimir polynomials. Consequently, the effective classical limit is the coadjoint orbit determined by the limiting values
\begin{align*}
\lambda_i
=
\lim_{n\to\infty}
\bigl(\hbar(\mu_n)\bigr)^{d_i}C_i(\mu_n),
\qquad
i=1,\ldots,r.
\end{align*}
We shall examine this point further below.
\end{rem}

\subsection{$su(3)$ case}
We examine the content of the preceding subsection in greater detail, focusing particularly on the case of $su(3)$. We use the notation summarized in Appendix \ref{Ap_su(3)}.

Since $\operatorname{rank}su(3)=2$, the highest weight vector can be written as
\begin{align}
\mu = p\omega_1+q\omega_2
\end{align}
in terms of the fundamental weight vectors $\omega_1$ and $\omega_2$. In what follows, we use the Dynkin label $(p,q)$ instead of $\mu$.

By the Weyl dimension formula, the dimension of the representation space $V^{(p,q)}$ of the $(p,q)$ representation is given by
\begin{align}
\dim V^{(p,q)}
=
\frac{1}{2}(p+1)(q+1)(p+q+2).
\label{vpq}
\end{align}
The eigenvalue of the quadratic Casimir operator is
\begin{align}
C_2(p,q)
=
\frac{1}{3}(p^2+q^2+pq)+(p+q),
\label{su3_2nd_Cas}
\end{align}
and the eigenvalue of the cubic Casimir operator is given by
\begin{align}
C_3(p,q)
=
\frac{1}{18}(p-q)(2p+q+3)(p+2q+3).
\label{su3_3_Cas}
\end{align}
According to the notation used in the general theory, these should be denoted by $C_i(p,q)$ $(i=1,2)$. However, in the case of $su(3)$, we write them as $C_2(p,q)$ and $C_3(p,q)$, respectively, in order to distinguish clearly between the quadratic and cubic Casimir operators.
The overall numerical factors are matters of normalization and are not essential.

As a sequence of representations $(p_n,q_n)$ for which $\dim V^{(p_n,q_n)}$ diverges, it is sufficient to consider the case
\begin{align}
\lim_{n\to\infty} (p_n + q_n)=\infty .
\end{align}
We fix each $\hbar_{(p_n,q_n)}$ by imposing
\begin{align}
\bigl(\hbar_{(p_n,q_n)}\bigr)^2 C_2(p_n,q_n)=r^2
\end{align}
with a positive constant $r$. Namely,
\begin{align}
\hbar_{(p_n,q_n)}
=
\frac{r}{\sqrt{C_2(p_n,q_n)}} .
\end{align}
Since at least one of $p_n$ and $q_n$ is nonzero, we may assume $q_n\neq 0$ in the following discussion. 
(The case $p_n\neq 0$ can be treated similarly and is therefore omitted.) Writing
\begin{align}
t_n=\frac{p_n}{q_n},
\qquad
0\le t_n<\infty ,
\end{align}
we obtain
\begin{align}
\lim_{n\to\infty} R_{3/2}(t_n)
:=
\lim_{n\to\infty}
\frac{C_3(p_n,q_n)}{\bigl(C_2(p_n,q_n)\bigr)^{3/2}}
=
\frac{\sqrt{3}}{6}
\lim_{n\to\infty}
\frac{(t_n-1)(2t_n+1)(t_n+2)}
{(t_n^2+t_n+1)^{3/2}} .
\end{align}
The function $R_{3/2}(t)$ is monotonically increasing for $0\le t$, and satisfies
\begin{align}
R_{3/2}(0)=-\frac{\sqrt{3}}{3},
\qquad
\lim_{t\to\infty}R_{3/2}(t)=\frac{\sqrt{3}}{3}.
\end{align}
Hence
\begin{align}
-\frac{\sqrt{3}}{3}
\le
R_{3/2}(t_n)
\le
\frac{\sqrt{3}}{3}.
\end{align}

We now fix a sequence of representations as in Proposition \ref{Syusoku1}. Suppose that
\begin{align}
\left(
\frac{p_n}{\sqrt{p_n^2+q_n^2}},
\frac{q_n}{\sqrt{p_n^2+q_n^2}}
\right)
\rightarrow
(p_\infty,q_\infty).
\end{align}
Then
\begin{align}
R
:=
\lim_{n\to\infty}
\frac{C_3(p_n,q_n)}
{\bigl(C_2(p_n,q_n)\bigr)^{3/2}}
=
\frac{\sqrt{3}}{6}
\frac{
(p_\infty-q_\infty)(2p_\infty+q_\infty)(p_\infty+2q_\infty)
}{
(p_\infty^2+q_\infty^2+p_\infty q_\infty)^{3/2}
}.
\end{align}

We summarize the result as follows.

\begin{proposition}\label{prop2_3}
Let $(p_n,q_n)$ be a sequence of representations such that
$\displaystyle
\lim_{n\to\infty}(p_n + q_n)=\infty$ and
$\displaystyle
\lim_{n\to\infty}t_n=t,
$
$ (0\le t\le \infty ),$
where $t_n=p_n/q_n$. 
Suppose that
$$\displaystyle
\lim_{n\to\infty}
\bigl(\hbar_{(p_n,q_n)}\bigr)^2 C_2(p_n,q_n)
=
r^2 \ \ (r>0).
$$
Then
$
\bigl(\hbar_{(p_n,q_n)}\bigr)^3 C_3(p_n,q_n)
$
converges, and
\begin{align}
\lim_{n\to\infty}
\bigl(\hbar_{(p_n,q_n)}\bigr)^3 C_3(p_n,q_n)
=
r^3 R .
\end{align}
By choosing the sequence of representations $(p_n,q_n)$, the real number $R$ can take any value in the interval
\begin{align}
-\frac{\sqrt{3}}{3}
\le
R
\le
\frac{\sqrt{3}}{3}.
\end{align}
\end{proposition}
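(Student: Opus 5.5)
The proof has two parts: convergence of the rescaled cubic Casimir, and surjectivity onto the interval.

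\emph{Convergence.} Write
\[
(\hbar_{(p_n,q_n)})^3 C_3(p_n,q_n)=\bigl((\hbar_{(p_n,q_n)})^2C_2(p_n,q_n)\bigr)^{3/2}\,R_{3/2}(t_n).
\]
This is valid because $C_2(p_n,q_n)>0$ and $\hbar_{(p_n,q_n)}>0$. The first factor tends to $r^3$ by hypothesis.

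For the second factor, use the explicit formulas (\ref{su3_2nd_Cas}) and (\ref{su3_3_Cas}) and divide numerator and denominator by $q_n^3$ (the case $q_n\neq 0$ as in the text). Since $p_n+q_n\to\infty$, the lower-order terms $(p+q)$ in $C_2$ and the shifts $+3$ in $C_3$ contribute $O(1/(p_n+q_n))$ after normalization. This is uniform in $t_n$, because the homogeneous denominator $p^2+pq+q^2$ is bounded below by $\tfrac12(p+q)^2$. Hence
\[
R_{3/2}(t_n)-\frac{\sqrt3}{6}\frac{(t_n-1)(2t_n+1)(t_n+2)}{(t_n^2+t_n+1)^{3/2}}\to 0 .
\]
The function $\phi(t)=\frac{\sqrt3}{6}\frac{(t-1)(2t+1)(t+2)}{(t^2+t+1)^{3/2}}$ is continuous on $[0,\infty)$ and has limit $\sqrt3/3$ as $t\to\infty$, so it extends continuously to $[0,\infty]$. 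Therefore $R_{3/2}(t_n)\to\phi(t)=R$, including the case $t=\infty$. A cleaner way to handle $t=\infty$ is to use the normalized coordinates $(p_\infty,q_\infty)$ on the unit quarter circle, where the homogeneous leading parts are continuous and the denominator never vanishes. This is exactly the argument of Proposition \ref{Syusoku1}, specialized.

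\emph{Surjectivity onto the interval.} I will show $\phi$ is continuous and strictly increasing from $\phi(0)=-\sqrt3/3$ to $\phi(\infty)=\sqrt3/3$. Monotonicity follows by computing $\phi'(t)$. The logarithmic derivative has numerator proportional to
\[
(t^2+t+1)\bigl[(2t+1)(t+2)+2(t-1)(t+2)+(t-1)(2t+1)\bigr]-\tfrac32(2t+1)(t-1)(2t+1)(t+2).
\]
I expect this to simplify to a positive multiple of $(t+1)^2$ or another manifestly positive polynomial; checking this algebra is the main routine obstacle. If the simplification proves messy, surjectivity alone suffices, and it follows from continuity and the intermediate value theorem.

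Given $R_0\in[-\sqrt3/3,\sqrt3/3]$, choose $t\in[0,\infty]$ with $\phi(t)=R_0$. Then realize $t$ as a limit of integer ratios. For finite $t$, take $q_n=n$ and $p_n=\lfloor tn\rfloor$, so $p_n+q_n\to\infty$ and $t_n\to t$. For $t=\infty$, take $(p_n,q_n)=(n,1)$ or $(n,0)$. Finally, define $\hbar_{(p_n,q_n)}=r/\sqrt{C_2(p_n,q_n)}$, which makes the Casimir hypothesis hold exactly. The convergence part then gives the limit $r^3R_0$.
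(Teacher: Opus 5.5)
Your proposal is correct and follows the same route as the paper. Both factor out $(\hbar^2 C_2)^{3/2}$, reduce $C_3/C_2^{3/2}$ to the homogeneous function of $t=p/q$, and use its endpoint values $\mp\sqrt3/3$; your uniform $O(1/(p_n+q_n))$ control of the lower-order terms, your treatment of $t=\infty$, and your explicit sequences fill in steps the paper leaves implicit. For the record, the numerator you wrote simplifies to $\tfrac{27}{2}t(t+1)$ rather than a multiple of $(t+1)^2$. Hence $\phi'(t)=\frac{9\sqrt3}{4}\,\frac{t(t+1)}{(t^2+t+1)^{5/2}}>0$ for $t>0$, which confirms the strict monotonicity the paper asserts without proof; the intermediate value theorem already suffices for the surjectivity claim.
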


Therefore, as will be shown below, if the sequence $(p_n,q_n)$ above is used as a sequence of irreducible representations in the construction of the quantization as a matrix regularization, then the classical limit is the coadjoint orbit determined by
\begin{align}
C^2(x)=r^2,
\qquad
C^3(x)=r^3R .
\end{align}
According to the notation used in the general theory, these Casimir polynomials should be denoted by $C^1(x)$ and $C^2(x)$. 
As in the case of the Casimir eigenvalues, we denote the quadratic Casimir polynomial by $C^2(x)$ and the cubic Casimir polynomial by $C^3(x)$.
The same convention will be used for the case of $su(2)$ in the next section. The reader should note that this shifts the indexing convention by one. This notation reflects the fact that, in the present examples, the invariant polynomials can be distinguished by their degrees.

\section{Strategy illustrated by a simple $su(2)$ Model; Fuzzy ${\mathbb R}^3$ from Fuzzy $S^2$ }\label{sect4}
For irreducible representations, there are well-established correspondences similar to those appearing in geometric quantization and deformation quantization. 
In contrast, the corresponding framework for reducible representations has not been systematically discussed. 
Therefore, in order to make the strategy transparent, it is useful first to understand it through a simple example.
As the simplest example, we first consider the  Fuzzy ${\mathbb R}^3$.\\

The  fuzzy sphere is considered in {\rm \cite{matrix1,fuzzy1}}. See {\rm \cite{matrix1,fuzzy1,fuzzyb,fuzzyc}} for details. 
In \cite{Rieffel:2021ykh}, more general and mathematically precise statements are given. 
The fuzzy ${\mathbb R}^3$ is discussed in \cite{Hammou:2001cc,Vitale:2012dz}.
\\

Let us consider ${su}(2)$ as $\mathfrak{g} $. 
Let $x_a~(1\le a\le 3)$ be commutative variables. 
$(x_1, x_2, x_3)=(x,y,z)$ is identified with the coordinates of ${\mathbb R}^3$.
The Lie-Poisson structure is defined by
\begin{align}
\{x_a, x_b\}=  \epsilon^{abc}x_c. \label{eq.poi}
\end{align}
$A_{{su}(2)}$ is given by $\mathbb{C}[x]$ with this Poisson bracket.
For an arbitrary element $f\in A_{{su}(2)}$, we write
\begin{align*}
f
=
f_0+f_a x_a+\frac{1}{2}f_{ab}x_a x_b+\cdots ,
\end{align*}
where $f_{a_1\cdots a_i}\in \mathbb{C}$ is completely symmetric with respect to the indices $a_1,\ldots,a_i$.

We emphasize that, unlike in the treatment found, for example, in Madore's textbook, we do not impose the traceless condition for arbitrary pairs of indices. Imposing this condition restricts the expansion to functions on the sphere. In the present discussion, however, we are simply considering the quantization of ${\mathbb R}^3$, and therefore we do not impose this condition.
Let $V^\mu$ be a finite-dimensional vector space over ${\mathbb C}$. 
For example, consider the two-dimensional representation $V^2={\mathbb C}^2$. 
Then the map
\[
q_2:A_{{su}(2)}\to {\rm Mat}_2({\mathbb C})
\]
is defined as a linear map by
\begin{align*}
q_2(f)
&:=
f_0 {Id}_2+f_a q_2(x_a),
\qquad
q_2(x_a):=\frac{\hbar}{2}\sigma^a ,
\end{align*}
where $\sigma^a$ are the Pauli matrices and ${Id}_k$ denotes the $k\times k$ identity matrix. \\

For $\dim V^\mu=k\ge 2$, the matrix regularization
\[
q_{k,r}:A_{{su}(2)}\to {\rm Mat}_k({\mathbb C})
\]
is defined by
\begin{align*}
q_{k,r}(f)
&:=
f_0 {Id}_k
+f_{a_1} q_{k,r}(x_{a_1})
+\frac{1}{2!}f_{a_1a_2} q_{k,r}(x_{a_1}x_{a_2})
+\cdots,\\
q_{k,r}(x_{a_1}\cdots x_{a_m})
&:=
\frac{\hbar(k,r)^m}{m!}
\sum_{\sigma\in {\rm Sym}(m)}
J_{a_{\sigma(1)}}\cdots J_{a_{\sigma(m)}} .
\end{align*}
Here $J_a$ are the generators of the $k$-dimensional irreducible representation of ${su}(2)$, namely the spin-$s$ representation with $k=2s+1$. 
The meaning of the subscript $r$ will become clear later. Each $q_{k,r}$ maps a polynomial to a $k\times k$ matrix.
The generators $J_a$ satisfy
\begin{align}
[J_a,J_b]
&=
i\epsilon^{abc}J_c,
\\
[q_{k,r}(x_a),q_{k,r}(x_b)]
&=
i\hbar(k,r)^2\epsilon^{abc}J_c
=
i\hbar(k,r)\epsilon^{abc}q_{k,r}(x_c).
\label{eq.com}
\end{align}
Here, since the basis is chosen to consist of Hermitian matrices, as is customary in physics, the structure constants differ by a factor of $i=\sqrt{-1}$.
The corresponding Casimir operator is
\begin{align}
\delta^{ab}J_aJ_b
=
\frac{1}{4}(k^2-1) {Id}_k .
\end{align}
Solving $\{ x_a , f(x) \} = 0~  (a=1,2,3)$ for a 2nd-degree homogeneous polynomial $f \in A_{su(2)}$, we obtain 
a solution as a quadratic Casimir polynomial
\begin{align}\label{su(2)casimir}
C^2(x) = \delta^{ab}x_a x_b .
\end{align}
$q_{k,r} (C^2(x)) =  \hbar^2(k,r) \delta^{ab}J_a J_{b}$ is the Casimir invariant.\\

We now show that, because of the existence of invariant polynomials, the quantization map has a nontrivial kernel and therefore cannot be faithful. For example, as in \cite{Gohara_Sako}, for a fixed radius $r$, we choose $\hbar(k,r)$ so that
\begin{align}
q_{k,r}(C^2(x))
=
\hbar^2(k,r)\delta^{ab}J_aJ_b
=
\hbar^2(k,r)\frac{1}{4}(k^2-1){Id}_k
= r^2{Id}_k .
\label{casimir_eigen_su(2)}
\end{align}
With this choice of $\hbar(k,r)$, we have
\begin{align}
q_{k,r}(C^2(x)-r^2)=0.
\end{align}
Hence $C^2(x) - r^2$ belongs to the kernel of $q_{k,r}$. Thus, if we consider only a single irreducible representation, the quantization of ${\mathbb R}^3$ cannot be faithful.
(We note, however, that a matrix regularization cannot be faithful, since it is a map into a finite-dimensional matrix algebra. Nevertheless, by introducing a filtration, for example by polynomial degree, one can obtain injectivity on each filtered component up to a given degree. What is shown here is a different phenomenon: the kernel of a matrix regularization contains elements arising from the existence of Casimir invariants.
)

The more general statement is as follows. \\
\bigskip

Let $\mathfrak{g}$ be a semisimple Lie algebra of a compact Lie group, and consider the algebra $A_\mathfrak{g}$. 
For a fixed $k$, we choose $\hbar(\mu)$ so that
\begin{align}
q_\mu(C^k(x)-\lambda_k)=0 .
\end{align}
Then it is clear that, independently of the rest of the structure of $A_\mathfrak{g}$, there is a kernel corresponding to functions on the level set
\begin{align}
C^k(x)=\lambda_k .
\end{align}
Moreover, there are further elements in the kernel, as follows.

As stated in \cite{Gohara_Sako}, we have the following proposition.
\begin{proposition}
Let $f^C(x)$ be a Casimir polynomial. Then $q_\mu(f^C(x))$ is a Casimir operator.
\end{proposition}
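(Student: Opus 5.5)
The plan is to show that $q_\mu(f^C)$ commutes with every generator $e^{(\mu)}_i$. Since $T_\mu$ is generated by these elements, this makes $q_\mu(f^C)$ central, i.e.\ a Casimir operator. It is also a polynomial in the generators, so it lies in $\operatorname{End}(V^\mu)$. By linearity, and because $\{x_i,\cdot\}$ preserves degree (the bracket is linear in the generators), I may assume $f^C$ is homogeneous of degree $m$. I write
\[
f^C=\sum f_{j_1\cdots j_m}x_{j_1}\cdots x_{j_m}
\]
with totally symmetric coefficients. If $m>n_\mu$, then $q_\mu(f^C)=0$ and there is nothing to prove. Otherwise $q_\mu$ acts without truncation, and $q_\mu(f^C)=\rho^\mu(q_U(f^C))$ up to the scalar $\hbar(\mu)^m$, where $q_U$ is the symmetrization map into $U(\mathfrak g)$ used in the proof of Proposition \ref{Syusoku1}.

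The key step is that symmetrization is $\mathfrak g$-equivariant. Consider the map $\operatorname{Sym}(\mathfrak g)\to U(\mathfrak g)$ sending $x_{j_1}\cdots x_{j_m}$ to $X_{(j_1,\dots,j_m)}$. Then
\[
[X_i,X_{(j_1,\dots,j_m)}]=\sum_{l=1}^m f_{ij_l}^{k}\,X_{(j_1,\dots,k,\dots,j_m)}.
\]
This holds because $\operatorname{ad}X_i$ is a derivation of $U(\mathfrak g)$: it acts on each ordered product $X_{j_{\sigma(1)}}\cdots X_{j_{\sigma(m)}}$ by replacing one factor at a time. Averaging over $\sigma$ then gives exactly the symmetrized expressions. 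On the Poisson side the same formula holds, with $\{x_i,\cdot\}$ a derivation and $\{x_i,x_{j}\}=f_{ij}^kx_k$. Hence
\[
[X_i,q_U(g)]=q_U(\{x_i,g\})
\]
for all $g$. Applying this to $g=f^C$ gives $[X_i,q_U(f^C)]=q_U(\{x_i,f^C\})=0$.

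Finally I apply the representation $\rho^\mu$, rescaled by $\hbar(\mu)$, and use (\ref{hbarCommRel}). This yields
\[
[e^{(\mu)}_i,q_\mu(f^C)]=\hbar(\mu)\,q_\mu(\{x_i,f^C\})=0,
\]
so $q_\mu(f^C)$ is central in $T_\mu$. For irreducible $V^\mu$, Schur's lemma then shows it is a scalar, consistent with (\ref{okubo}). The only delicate point I expect is the bookkeeping of the truncation $R_\mu$. Because $f^C$ can be taken homogeneous, and the commutator with $e^{(\mu)}_i$ preserves the degree of the symmetrized monomial, truncation never interferes: the degree-$m$ part either survives intact or is zero.
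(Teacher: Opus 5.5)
Your proof is correct. It rests on the exact identity $[e^{(\mu)}_i,q_\mu(g)]=\hbar(\mu)\,q_\mu(\{x_i,g\})$ for homogeneous $g$ of degree at most $n_\mu$, which follows from the $\mathfrak g$-equivariance of symmetrization. You combine this with the observation that homogeneous components of a Casimir polynomial are again Casimir, so the truncation $R_\mu$ never interferes, and this shows that $q_\mu(f^C)$ is the image of a central element of $U(\mathfrak g)$ and hence central in $T_\mu$. The paper gives no proof of this proposition and simply quotes it from \cite{Gohara_Sako}; your equivariance argument is the standard proof of this fact and is consistent with the paper's later use of $q_U$ and central characters in Proposition~\ref{Syusoku1}.
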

Since $V^\mu$ is irreducible, this Casimir operator acts as a scalar multiple of the identity by Schur's lemma. Let $\lambda_{f^C}(\hbar(\mu))$ denote the eigenvalue of $q_\mu(f^C(x))$ on $V^\mu$. Then
\begin{align}
q_\mu\bigl(f^C(x)-\lambda_{f^C}(\hbar(\mu))\bigr)=0 .
\end{align}
If $\lambda_{f^C}$ is allowed to depend on $\hbar(\mu)$, then, regardless of whether $\hbar(\mu)$ is chosen so that
$
q_\mu(C^k(x)-\lambda_k)=0
$
holds, there exist $r=\operatorname{rank}\mathfrak{g}$ independent Casimir operators. Thus, for a fixed irreducible representation, the $r$ independent Casimir polynomials give rise to kernel elements of the form
\begin{align}
C^i(x)-\lambda_i(\hbar(\mu)),
\qquad
i=1,\ldots,r .
\end{align}
In this sense, choosing a representation generally produces kernel elements associated with the Casimir polynomials.
\\

It should be noted that, although
$
C^i(x)-\lambda_i(\hbar(\mu))
$
belongs to the kernel of $q_\mu$, for a general polynomial
$g(x)\in A_{\mathfrak g}$, one has
\begin{align}
q_\mu\left(g(x)(C^i(x)-\lambda_i(\hbar(\mu)))\right)
=
0\mod\hbar(\mu).
\end{align}
\bigskip

Taking this into account, we return to the example of $su(2)$ and continue the discussion. 
The meaning of the subscript $r$ is now clear: $q_{k,r}$ denotes the quantization map associated with the irreducible representation and with the choice of $\hbar(k,r)$ satisfying (\ref{casimir_eigen_su(2)}), for $k=2,3,\ldots$.
This means that, for polynomial functions that vanish on the sphere of fixed radius $r$, their images under the quantization map tend to zero in the classical limit. Therefore, the quantization defined by such a sequence of irreducible representations should be regarded not as a quantization of ${\mathbb R}^3$ itself, but rather as a quantization of a sphere of fixed radius, namely as a fuzzy sphere. Based on this observation, we discussed quantization using reducible representations in the previous paper\cite{Gohara_Sako}.

In what follows, we explicitly carry out the construction of fuzzy ${\mathbb R}^3$ by using quantization associated with reducible representations.\\
\bigskip

As already reviewed in Subsection \ref{rev_sect6.1}, weak matrix regularization can also be constructed for reducible representations. We shall use this construction here. For $k\in{\mathbb N}$, consider the sequence
\begin{align}
r_1=\sqrt{\frac{1}{k}},
\quad
r_2=\sqrt{\frac{2}{k}},
\quad
\ldots,
\quad
r_{k^2}=\sqrt{\frac{k^2}{k}}=\sqrt{k}.
\end{align}
We define a sequence of quantization maps associated with reducible representations by
\begin{align}
q_k^R:=\bigoplus_{i=1}^{k^2} q_{k,r_i},
\qquad
k=2,3,\ldots .
\end{align}

Note that, for each $q_{k,r_i}$, in the limit $k\to\infty$ we have
$
\hbar(k,r_i)\to 0
$ and,
$
r_{k^2}\to\infty . $
Moreover, in the limit $k\to\infty$, we have
\begin{align}
|r_{i+1}-r_i|
=
\left|
\sqrt{\frac{i+1}{k}}-\sqrt{\frac{i}{k}}
\right|
\le
\left|
\sqrt{\frac{2}{k}}-\sqrt{\frac{1}{k}}
\right|
\to 0 .
\end{align}
This means that, by taking $k$ sufficiently large, the spacing between the radii of the spheres can be made arbitrarily small.

Thus the sequence of quantization maps is constructed so that, for any point in ${\mathbb R}^3$, there is a sphere in the sequence passing through an arbitrarily small neighborhood of that point. In this sense, the construction corresponds to a quantization of the decomposition of ${\mathbb R}^3$ into spheres, which becomes dense in the classical limit.


\section{The $su(3)$ case: fuzzy $S^7$}\label{sect5}

In the previous section, we considered the quantization of ${\mathbb R}^3$. 
Since irreducible representations correspond to quantizations of coadjoint orbits, we realized a matrix regularization of ${\mathbb R}^3$ by filling ${\mathbb R}^3$ densely with spheres, which are coadjoint orbits. 
In this section, we construct a more complicated example, namely fuzzy $S^7$, by using the Lie algebra $su(3)$.\\

Using the notation for $su(3)$ given in Appendix \ref{Ap_su(3)}, we first prepare a matrix regularization of ${\mathbb R}^8={su}(3)^*$. 
The coordinates $x=(x_1,\ldots,x_8)$ are equipped with the Lie-Poisson structure
\begin{align}
\{x_a,x_b\}=f_{ab}^{\ \ c}x_c .
\end{align}
Let $V^{(p,q)}$ denote the representation space of the highest-weight representation $\rho^{(p,q)}$ with Dynkin label $(p,q)$. 
The elements $\rho^{(p,q)}(T_i)$ in $\operatorname{End}(V^{(p,q)})$ have structure constants $f_{ab}^{\ \ c}$:
\begin{align*}
[\rho^{(p,q)}(T_a),\rho^{(p,q)}(T_b)]
=
i f_{ab}^{\ \ c}\rho^{(p,q)}(T_c).
\end{align*}

In this section, in order to construct fuzzy $S^7$ of radius $r$, we define $\hbar_{(p,q)}$ by using the eigenvalue $C_2(p,q)$ of the quadratic Casimir operator. Namely, we impose
$
\hbar_{(p,q)}^2 C_2(p,q)=r^2,
$
or equivalently
\begin{align}
\hbar_{(p,q)}
:=
\frac{r}{\sqrt{C_2(p,q)}} .
\end{align}
Using this $\hbar_{(p,q)}$, we rescale the generators as
\begin{align}
e_i^{(p,q)}
:=
\hbar_{(p,q)}\rho^{(p,q)}(T_i),
\qquad
i=1,2,\ldots,8 .
\end{align}
They satisfy
\begin{align}
[e_a^{(p,q)},e_b^{(p,q)}]
= i
\hbar_{(p,q)}f_{ab}^{\ \ c}e_c^{(p,q)} .
\end{align}

In the present case, the basic quantization map 
$q_{(p,q)}:A_{{su}(3)}\rightarrow \operatorname{End}(V^{(p,q)})$
in Definition \ref{def_FuzzyQ} is given by
\begin{align}
q_{(p,q)}(x_i)
=
e_i^{(p,q)}
=
\hbar_{(p,q)}\rho^{(p,q)}(T_i).
\end{align}
Then we have
\begin{align}
q_{(p,q)}(C^2(x))
=
\hbar_{(p,q)}^2 C_2(p,q)~ {Id}_{V^{(p,q)}}
=
r^2~ {Id}_{V^{(p,q)}},
\end{align}
and
\begin{align}
q_{(p,q)}(C^3(x))
=
\hbar_{(p,q)}^3 C_3(p,q) {Id}_{V^{(p,q)}} .
\end{align}
Since $\hbar$ is fixed by the condition $\hbar_{(p,q)}^2 C_2(p,q)=r^2$, the quadratic Casimir operator is mapped to a fixed constant. On the other hand, the cubic Casimir operator is mapped to a scalar depending on the representation, or equivalently on $\hbar_{(p,q)}$ through the chosen representation.

\subsection{Fuzzy $S^7$}

In this subsection, we construct fuzzy $S^7$. 
We first show that sequences of irreducible representations used in the matrix regularization of $S^7$ give rise, in the classical limit, to coadjoint orbits. 
We then show that, by combining these sequences through reducible representations, the resulting coadjoint orbits fill $S^7$ densely.\\

To consider a quantization that yields fuzzy $S^7$,
we consider the ideal of $A_{su(3)}$ given by
\begin{align}
I(C):=(C^2(x)-r^2).
\end{align}
In this case, the Gr\"{o}bner basis $G$ is also given by
$
G=\{C^2(x)-r^2\}.
$

Next, we define the matrix regularization
\begin{align}
q_{A_{{su}(3)}/I,(p,q)}
:
A_{su(3)}/I(C)
\rightarrow
\operatorname{End}(V^{(p,q)})
\end{align}
associated with one irreducible representation by (\ref{q_AImu_general}). 
Namely, for any $f\in A_{{su}(3)}$, let $r_f$ denote the remainder obtained by dividing $f$ by $C^2(x)-r^2$. 
We assume that $\deg r_f\le n_{(p,q)}$. If necessary, this condition can always be satisfied by taking $\dim V^{(p,q)}$ sufficiently large compared with the degree of $r_f$. 
Then
\begin{align}
q_{A_{su(3)}/I,(p,q)}([f])
=
q_{(p,q)}(r_f).
\end{align}

Since there exists a polynomial $d(x)$ such that
\begin{align}
C^3(x)=d(x)(C^2(x)-r^2)+r_{C^3(x)},
\end{align}
using (\ref{asym_hom_mu}) and $q_{(p,q)}(C^2(x)-r^2) =0$  we have
\begin{align}
q_{A_{{su}(3)}/I,(p,q)}([C^3(x)])
&=
q_{(p,q)}(r_{C^3(x)}) \notag\\
&=
q_{(p,q)}(C^3(x))
+
\hbar_{(p,q)}D(e^{(p,q)}).
\end{align}
Here $D(e^{(p,q)})$ is an expression of degree at most two in $e^{(p,q)}$, namely
\begin{align}
D(e^{(p,q)})
=
\sum_{i,j}a_{ij}e_i^{(p,q)}e_j^{(p,q)},
\qquad
a_{ij}\in{\mathbb C},
\end{align}
where the coefficients $a_{ij}$ do not depend on $\hbar_{(p,q)}$. 
In other words
$
\hbar_{(p,q)}D(e^{(p,q)})
$
is 
$\tilde{O}( \hbar_{(p,q)} )$.
Using (\ref{su3_3rd_Casimir}), we obtain
\begin{align}
q_{(p,q)}(C^3(x))
=
\hbar_{(p,q)}^3 C_3(p,q)\operatorname{Id}_{V^{(p,q)}}.
\end{align}
We have already seen in Proposition \ref{prop2_3} that this converges to $r^3R$ in the limit.
Moreover, depending on the choice of the limit of
$
\frac{p_n}{q_n}=t_n,
$
the quantity $R$ can take any value in the interval
$
-\frac{\sqrt{3}}{3}
\le
R
\le
\frac{\sqrt{3}}{3}.
$
This means that, in the limit \begin{align} \dim V^{(p,q)}\to\infty, \end{align} or equivalently $\hbar_{(p,q)}\to 0$, the algebra asymptotically approaches the algebra of polynomial functions on the algebraic variety determined by \begin{align} \sum_{i=1}^8 x_i^2=r^2, \qquad C^3(x)=r^3R, \label{lim_ire_rep} \end{align} namely, on a single coadjoint orbit.

To explain this correspondence in more detail, we use the map
\begin{align}
\phi_\mu:\operatorname{End}(V^\mu)\rightarrow A_{\mathfrak g}
\end{align}
introduced in \cite{Gohara_Sako}. 
For each PBW(Poincar\'{e}-Birkhoff-Witt)-type symmetrized element retained in the chosen basis, this map is defined by
\begin{align}
\phi_\mu\left(e^{(\mu)}_{(i_1,\ldots,i_k)}\right)
=
x_{i_1}\cdots x_{i_k}. \label{asy_hom}
\end{align}

In the present case, we use
\begin{align}
\phi_{(p,q)}:
\operatorname{End}(V^{(p,q)})
\rightarrow
A_{{su}(3)}
\end{align}
to establish this correspondence. This map was referred to as an asymptotic algebra homomorphism.
When $\dim V^{(p,q)}$ is sufficiently large, 
for any $g(x) \in A_{{su}(3)}$ 
the asymptotic algebra homomorphism satisfies
$
\phi_{(p,q)}\circ q_{(p,q)}( g(x)(C^i(x)-\lambda_i (\hbar_{(p,q)})))
= 0
\mod \hbar_{(p,q)}.
$
Therefore, as $\hbar_{(p,q)}\to 0$, the asymptotic algebra homomorphism relates the algebra generated by $e^{(p,q)}$ to the coordinate ring of the corresponding coadjoint orbit,
$
\mathbb{C}[x_1,\ldots,x_8]/(C^2(x)-r^2,C^3(x)-r^3R).
$
The multiplicative defect of this map vanishes in the classical limit. 
However, the map $\phi_{(p,q)}$ need not be surjective onto the coordinate ring of the coadjoint orbit, even when restricted to elements of low polynomial degree. 
Indeed, finite-dimensional representations may satisfy additional representation-dependent relations that do not follow from the Casimir constraints. 
As a result, some PBW-type symmetrized elements that would correspond to 
monomials in the coordinate ring fail to be linearly independent and therefore cannot be included in a basis of $\operatorname{End}(V^{(p,q)})$. 
The corresponding monomials may consequently be absent 
from the image of $\phi_{(p,q)}$.
Thus, although the multiplicative defect of $\phi_{(p,q)}$ vanishes in the classical limit, the map does not necessarily become surjective degree by degree as $\hbar_{(p,q)}\to 0$. Accordingly, a degreewise asymptotic isomorphism with the full coordinate ring is not asserted in general.
If, for a particular sequence of representations, no additional representation-dependent relations persist in any fixed PBW degree as $\hbar_{(p,q)}\to 0$, then the correspondence becomes asymptotically one-to-one degree by degree.
Thus, by choosing a suitable sequence of representations, the correspondence may approach an isomorphism. We do not, however, pursue this issue in this paper. \\
\bigskip

We now return to the general case and introduce several definitions.
Consider
$
\phi_\mu\circ q_\mu
:
\mathbb{C}[x]
\longrightarrow
\mathbb{C}[x]
$
such that there exist polynomials
$f^1_\hbar(x),\ldots,f^n_\hbar(x)$ satisfying
\begin{align}
q_\mu\bigl(f^a_\hbar(x)\bigr)=0,
\qquad
a=1,\ldots,n.
\end{align}
For every
$g(x)\in\mathbb{C}[x]$ we have
\begin{align}
\phi_\mu\circ q_\mu \bigl(g(x)f^a_\hbar(x)\bigr)
&=
\phi_\mu ( q_\mu(g(x))  q_\mu\bigl(f^a_\hbar(x)\bigr) )
\mod\hbar
\notag\\
&=
0\mod\hbar,
\end{align}
for sufficiently large $\dim V^{\mu}$.
Consequently, for the ideal
\begin{align}
I_\hbar
:=
\bigl(f^1_\hbar(x),\ldots,f^n_\hbar(x)\bigr),
\end{align}
we have
$
\phi_\mu\circ q_\mu(I_\hbar)=0\mod\hbar.
$
Thus, modulo terms proportional to $\hbar$, the map $\phi_\mu\circ q_\mu$ factors
through the quotient algebra $\mathbb{C}[x]/I_\hbar$.

Suppose moreover that, as $\hbar(\mu)\to0$, the polynomials
$f^a_\hbar(x)$ converge to polynomials $f^a_0(x)$ independent of
$\hbar$. In what follows, we shall say that the sequence converges in
the sense of relations to the algebraic variety
\begin{align}
M
:=
\left\{
x
\ \middle|\
f^1_0(x)=0,\ldots,f^n_0(x)=0
\right\},
\end{align}
or, equivalently, that it has $M$ as its classical limit in the sense
of relations.\\

Let us return to the topic of $su(3)$.
For a sequence of irreducible representations such that
$
\frac{p_n}{q_n}=t_n\rightarrow t,
$
where the case $t=\infty$ is also included, the classical limit converges to the coadjoint orbit determined by (\ref{lim_ire_rep}) in the sense of relations. In the present case, such coadjoint orbits are of the types
$SU(3)/T^2$
and 
${\mathbb C}P^2$.\\

\begin{rem}
In the definition of convergence in the sense of relations, it should
be noted that $\phi_\mu$ is not necessarily essential. Rather, convergence
to a coadjoint orbit in the classical limit occurs because the
components of $\ker q_\mu$ associated with the Casimir polynomials
converge to the defining relations of a certain coadjoint orbit. As a
result, the image of $\phi_\mu$ approaches the algebra of functions on that
orbit. Thus, although $\phi_\mu$ describes the corresponding classical
limit, it is not the cause of this convergence. In the examples
considered here, the essential ingredient is the choice of the sequence
of representations used to construct the sequence of maps $q_\mu$.
\end{rem}

We have shown that sequences of irreducible representations can be chosen so that their classical limits are coadjoint orbits. 
It remains to show that, by taking a sequence of suitable collections of irreducible representations and combining them into reducible representations, 
the corresponding coadjoint orbits fill $S^7$ densely.

We show that, for every point of $S^7$, there exists a solution of (\ref{lim_ire_rep}) containing that point, or equivalently, that there exists a sequence of irreducible representations whose corresponding coadjoint orbits approach that point.

The equations in (\ref{lim_ire_rep}) are expressed in terms of the Casimir polynomials ${\mathrm{tr}}\,X^2$ and ${\mathrm{tr}}\,X^3$ and constants, and are therefore invariant under the action of $SU(3)$. Hence, by conjugation under $SU(3)$, it is sufficient to restrict the discussion to the Cartan subalgebra. Namely, we may set
\begin{align}
x_1=x_2=x_4=x_5=x_6=x_7=0
\end{align}
and consider
\begin{align}
x_3^2+x_8^2&=r^2,\\
-\frac{\sqrt{3}}{3}x_8^3+\sqrt{3}x_3^2x_8&=r^3R.
\end{align}
The points on the resulting circle in the Cartan subalgebra can be parametrized as
\begin{align}
x_8=r\cos\theta,
\qquad
x_3=r\sin\theta,
\qquad
\theta\in{\mathbb R}.
\end{align}
Then the left-hand side of the second equation becomes
\begin{align*}
-\frac{\sqrt{3}}{3}x_8^3+\sqrt{3}x_3^2x_8
&=
-\frac{\sqrt{3}}{3}r^3\cos\theta
\left(\cos^2\theta-3\sin^2\theta\right)\\
&=
-\frac{\sqrt{3}}{3}r^3\cos 3\theta.
\end{align*}
Therefore, as $\theta$ varies, the value of the cubic Casimir ranges over the entire interval
\begin{align}
-\frac{\sqrt{3}}{3}r^3
\le
r^3R
\le
\frac{\sqrt{3}}{3}r^3.
\end{align}
As shown above, for every value in this interval, there exists a sequence of irreducible representations for which the corresponding normalized cubic Casimir converges to that value.

Consequently, for every point of $S^7$, there exists a sequence of coadjoint orbits associated with irreducible representations that approaches that point. Equivalently, the coadjoint orbits arising from such sequences fill $S^7$ densely. \\
\begin{definition}
Let $M$ be an algebraic variety that admits a decomposition into a disjoint union
$
M=\coprod_i M_i.
$
Suppose that the matrix regularizations
$q_{A/I,\mu}^a$ introduced in Definition \ref{def_2_10}, each
associated with an irreducible component of the representation,
converge to the spaces $M_{i_\mu^a}$ in the sense of relations.
Suppose further that the resulting sequence of spaces
$M_{i_\mu^a}$ fills $M$ densely.
Here, ``dense'' is understood with respect to the Euclidean topology.
Then we say that 
$q_{A/I,\mu}^R:
A_\mathfrak{g}/I(C)
\rightarrow
\operatorname{End}(V^\mu)$ 
in Definition \ref{def_2_10}
has $M$ as its classical limit in the sense of weak matrix regularization.
\end{definition}

From the above discussion, the following is obtained.

\begin{theorem}
There exists a weak matrix regularization of $S^7$ constructed from a sequence of reducible representations whose classical limit, in the sense of weak matrix regularization, is $S^7$.
\end{theorem}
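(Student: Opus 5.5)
The plan is to build the reducible representations explicitly, in the same way fuzzy ${\mathbb R}^3$ was built in Section~\ref{sect4}, but with the parameter being the ratio $t=p/q$ rather than the radius. Fix $r>0$ and the ideal $I(C)=(C^2(x)-r^2)$, whose Gr\"obner basis is $G=\{C^2(x)-r^2\}$. For each $k\in{\mathbb N}$ choose the finite family of Dynkin labels
\begin{align*}
(p,q)=(i,\,k)\quad (i=0,1,\ldots,k^2),\qquad\text{together with}\qquad (k^2,0),
\end{align*}
and set $V^{(k)}:=\bigoplus_{a}V^{(p_a,q_a)}$. On each block take $\hbar_a=r/\sqrt{C_2(p_a,q_a)}$, so that $q^a_{(p,q)}(C^2(x))=r^2\,Id_a$ and hence $q^R(C^2(x)-r^2)=0$. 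Define $q^R_{A/I,k}=\bigoplus_a q^a_{A/I,(p_a,q_a)}$ as in Definition~\ref{def_2_10}.

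\textbf{Step 1 (weak matrix regularization).} Every block has $p_a+q_a\ge k$, so $C_2(p_a,q_a)\ge k^2/3$ and
\begin{align*}
\max_a \hbar_a\le \sqrt{3}\,r/k\to 0 .
\end{align*}
By the Weyl formula (\ref{vpq}), every block dimension also diverges. The theorem on $q^R_{A/I,\mu}$ at the end of Section~\ref{sect2} then shows that this sequence is a weak matrix regularization.

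\textbf{Step 2 (each block converges to an orbit).} Given any sequence of blocks, Proposition~\ref{prop2_3} applies: along a subsequence with $t_n\to t\in[0,\infty]$, the rescaled cubic Casimir $\hbar^3C_3$ converges to $r^3R(t)$. The kernel argument of Section~\ref{sect5} then shows that the block converges in the sense of relations to the coadjoint orbit
\begin{align*}
\{C^2=r^2,\ C^3=r^3R(t)\}.
\end{align*}
The polynomials $C^3(x)-\hbar^3C_3\to C^3(x)-r^3R$ play the role of the $f^a_\hbar$ in the definition of convergence in the sense of relations.

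\textbf{Step 3 (density).} This is the main obstacle. The values $\hbar_a^3C_3(p_a,q_a)=r^3R_{3/2}(t_a)$ with $t_a=i/k$ have the following properties.
\begin{itemize}
\item $R_{3/2}$ is continuous and monotone on $[0,\infty)$.
\item The points $t_a$ have mesh $1/k\to 0$ and extend up to $t=k\to\infty$.
\item The extra label $(k^2,0)$ gives $R\to\sqrt3/3$, while $t=0$ gives $R=-\sqrt3/3$.
\end{itemize}
So the set $\{R_{3/2}(t_a)\}$ becomes dense in $[-\sqrt3/3,\sqrt3/3]$. To make this uniform I will use that $R_{3/2}(t)\to\sqrt3/3$ as $t\to\infty$, so the tail $t\ge T$ contributes values within $\epsilon$ of $\sqrt3/3$, together with uniform continuity on $[0,T]$.

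It remains to pass from density of Casimir values to density of orbits in $S^7$. Take $x\in S^7$. Conjugate it by $SU(3)$ into the Cartan circle $x_8=r\cos\theta$, $x_3=r\sin\theta$. As computed before the definition, its cubic Casimir is $-\tfrac{\sqrt3}{3}r^3\cos3\theta$. Choose a block value $R_a$ with $|r^3R_a+\tfrac{\sqrt3}{3}r^3\cos3\theta|<\epsilon$. Since $\cos3\theta$ is continuous and surjective onto $[-1,1]$, there is $\theta'$ near $\theta$, i.e.\ a Cartan point $y$ near the conjugated $x$, lying on that orbit. Some care is needed at the extremes $\cos3\theta=\pm1$, where I will use $\arccos$ continuity at endpoints.

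Conjugating $y$ back by the same group element (an isometry) gives a point of the orbit $M_{i_\mu^a}$ within a distance of $x$ that tends to $0$. Hence the limiting orbits fill $S^7$ densely in the Euclidean topology, and the construction satisfies the definition of having $S^7$ as classical limit in the sense of weak matrix regularization.
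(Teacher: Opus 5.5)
Your proposal is correct and follows essentially the same route as the paper: a direct sum of irreducible blocks with $\hbar_a$ fixed by $C_2$, Proposition~\ref{prop2_3} for the block limits, density of the ratios $p/q$ in $[0,\infty]$, and the Cartan-circle formula $-\tfrac{\sqrt3}{3}r^3\cos3\theta$ to pass to density in $S^7$. The differences are cosmetic: you use labels $(i,k)$, $0\le i\le k^2$, plus $(k^2,0)$ instead of the paper's family $p+q=k$ with its $\arctan$ gap bound $2k/(k^2-1)$, and you make the perturb-$\theta$-and-conjugate-back step explicit. One minor slip: $\hbar_a^3C_3(p_a,q_a)=r^3R_{3/2}(t_a)$ is not an identity and holds only up to a uniform $O(1/k)$ error. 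For $(k^2,0)$ the exact value even exceeds $\tfrac{\sqrt3}{3}r^3$, so that exact level set in $S^7$ is empty. You should therefore label each block's orbit by $R_{3/2}(t_a)$, which has the same subsequential limits and is what your density step actually uses.
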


A matrix regularization constructed from reducible representations whose classical limit is $S^7$ can, for example, be given explicitly as follows:
\begin{align}
q_{A_{{su}(3)}/I,k}^{R}
:=
\bigoplus_{\substack{p,q\in{\mathbb Z}_{\geq 0}\\ p+q=k}}
q_{A_{{su}(3)}/I,(p,q)},
\qquad
k=1,2,3,\ldots .
\end{align}
It follows from (\ref{su3_dim}) that, for every pair $(p,q)$ satisfying $p+q=k$,
\begin{align}
\dim V^{(p,q)}\rightarrow\infty
\end{align}
as $k\to\infty$. Moreover, $\hbar_{(p,q)}\to 0$ in this limit.

Therefore, as discussed above, it remains to show that the ratios $p/q$, or equivalently $q/p$, of the nonnegative integer pairs satisfying $p+q=k$ become dense in the extended interval $[0,\infty]$ as $k\to\infty$. Equivalently, by applying the continuous map $\arctan$, it is sufficient to show that the corresponding angles become dense in $[0,\pi/2]$.

The adjacent lattice points satisfying $p+q=k$ can be written as
\begin{align}
(k-i,i)
\qquad\text{and}\qquad
(k-i-1,i+1),
\qquad
i=0,1,\ldots,k-1.
\end{align}
The corresponding angles are
\begin{align}
\arctan\left(\frac{k-i}{i}\right)
\qquad\text{and}\qquad
\arctan\left(\frac{k-i-1}{i+1}\right),
\end{align}
where, for $i=0$, we define
$\displaystyle
\arctan\left(\frac{k-i}{i}\right)=\frac{\pi}{2}.
$
Thus, it is sufficient to show that the maximum gap between consecutive terms in the sequence of angles converges to zero.  Indeed, a direct calculation with the sum identity
for $\tan$ gives
\begin{align*}
0
&\leq
\max_{0\leq i\leq k-1}
\left|
\arctan\left(\frac{k-i}{i}\right)
-
\arctan\left(\frac{k-i-1}{i+1}\right)
\right| 
\leq
\frac{2k}{k^2-1},
\end{align*}
for $k\geq 2$. Since
$ \displaystyle
\frac{2k}{k^2-1}\rightarrow 0
$
as $k\to\infty$, these angles become dense in $[0,\pi/2]$.

Consequently, the corresponding ratios $p/q$ become dense in $[0,\infty]$. Since the normalized cubic Casimir is a continuous function of this ratio and takes all values in the interval
$ \displaystyle
-\frac{\sqrt{3}}{3}
\leq
R
\leq
\frac{\sqrt{3}}{3},
$
the coadjoint orbits associated with the irreducible components of $q_{A_{{su}(3)}/I,k}^{R}$ fill $S^7$ densely in the classical limit. In this way, we obtain a matrix regularization whose classical limit is $S^7$ in the sense of weak matrix regularization.

\section{Generalization}\label{sect6}
We now generalize the preceding discussion. \\



Let $\mathfrak{g}$ be a $d$-dimensional compact semisimple Lie algebra. Let
$
C^n(x),\
n=1,2,\ldots,r=\operatorname{rank}\mathfrak{g},
$
be Casimir polynomials on ${\mathbb R}^d$, where $C^n(x)$ is homogeneous of degree $d_n$ and they are algebraically independent generators
of invariant polynomials.
As is known, a coadjoint orbit is determined by fixing the values of all the Casimir polynomials. For
\begin{align}
\bigl(C^1(x),\ldots,C^r(x)\bigr)
=
(c_1,\ldots,c_r)
=:
{\mathbf c},
\end{align}
we denote by $O_{\mathbf c}$ the coadjoint orbit corresponding to ${\mathbf c}$. Using an invariant inner product to identify $\mathfrak{g}^*$ with $\mathfrak{g}$, we write
\begin{align}
O_{\mathbf c}
:=
\left\{
X=x_i e_i\in\mathfrak{g}
\ \middle|\
C^1(x)=c_1,\ldots,C^r(x)=c_r
\right\}.
\end{align}
About this description, see, for example, the discussion in \cite{Lledo}.

As a space containing $O_{\mathbf c}$, we fix only
\begin{align}
(a_1,\ldots,a_p)
=
(c_{i_1},\ldots,c_{i_p})
=:
{\mathbf a},
\qquad
p\leq r,
\end{align}
and consider
\begin{align}
S_{\mathbf a}
:=
\left\{
X=x_i e_i\in\mathfrak{g}
\ \middle|\
\bigl(C^{i_1}(x),\ldots,C^{i_p}(x)\bigr)
=
\bigl(c_{i_1},\ldots,c_{i_p}\bigr)
=
{\mathbf a}
\right\}.
\end{align}
In what follows, we denote this index set by
$
I_{\mathbf a}:=\{i_1,\ldots,i_p\},
$
and denote the complementary index set by
$
I_{\mathbf b}
:=
\{1,2,\ldots,r\}\setminus I_{\mathbf a}
=
\{j_1,\ldots,j_{r-p}\}.
$

For fixed ${\mathbf a}$, let $K_{\mathbf a}$ denote the set of all possible values of the remaining $r-p$ invariant polynomials:
\begin{align}
K_{\mathbf a}
:=
\left\{
{\mathbf b}
=
(b_1,\ldots,b_{r-p})\in{\mathbb R}^{r-p}
\ \middle|\
\begin{array}{l}
\text{there exists }x\in\mathfrak{g}\text{ such that}\\
C^{i_\ell}(x)=a_\ell
\quad
(\ell=1,\ldots,p),\\
C^{j_m}(x)=b_m
\quad
(m=1,\ldots,r-p)
\end{array}
\right\}.
\end{align}
Let $O_{{\mathbf a},{\mathbf b}}$ denote the coadjoint orbit determined by the complete set of invariant values $({\mathbf a},{\mathbf b})$. Then
$\displaystyle 
S_{\mathbf a}
=
\coprod_{{\mathbf b}\in K_{\mathbf a}}
O_{{\mathbf a},{\mathbf b}}.
$

Indeed, every point of $S_{\mathbf a}$ is an element of the Lie algebra and therefore belongs to a coadjoint orbit, where we identify $\mathfrak{g}$ with $\mathfrak{g}^*$ by means of an invariant inner product. Consequently, by allowing ${\mathbf b}$ to range over all values in $K_{\mathbf a}$, the space $S_{\mathbf a}$ is reconstructed as the disjoint union of the corresponding coadjoint orbits.

\begin{lemma}\label{lem5_0}
Let $\mathfrak{g}$ be a $d$-dimensional compact semisimple Lie algebra.
Let
$
C^n(x),
\
n=1,2,\ldots,r=\operatorname{rank}\mathfrak{g},
$
be algebraically independent generators of the algebra of invariant polynomials on $\mathfrak{g}$.
Choose an index set
$
I_{\mathbf a}:=\{i_1,\ldots,i_p\},
\
p\leq r,
$
and let
$
I_{\mathbf b}
:=
\{1,\ldots,r\}\setminus I_{\mathbf a}
=
\{j_1,\ldots,j_{r-p}\}.
$
For fixed values
$
{\mathbf a}:=(a_1,\ldots,a_p),
$
define the subvariety 
$ S_{\mathbf a} $ and $K_{\mathbf a}$ as above.
For each ${\mathbf b}\in K_{\mathbf a}$, let
$O_{{\mathbf a},{\mathbf b}}$ denote the coadjoint orbit determined by the complete set of Casimir values $({\mathbf a},{\mathbf b})$. Then
\begin{align}
S_{\mathbf a}
=
\coprod_{{\mathbf b}\in K_{\mathbf a}}
O_{{\mathbf a},{\mathbf b}}.
\end{align}
\end{lemma}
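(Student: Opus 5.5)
The plan is to prove the set equality by double inclusion and then check disjointness, using the orbit-separation result quoted in the introduction: for compact connected $G$, the values of a set of homogeneous algebraically independent generators of the invariant polynomial algebra determine each adjoint orbit uniquely \cite[Theorem~4.37(ii)]{Alexandrino_Bettiol}. Throughout, $\mathfrak{g}^*$ is identified with $\mathfrak{g}$ via an invariant inner product, so coadjoint orbits are adjoint orbits. Each $C^n$ is $Ad$-invariant, hence constant on every orbit.

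\textbf{Inclusion $\supseteq$.} Take ${\mathbf b}\in K_{\mathbf a}$ and $X\in O_{{\mathbf a},{\mathbf b}}$. By definition of $O_{{\mathbf a},{\mathbf b}}$ as the common level set of all $C^n$, we have $C^{i_\ell}(x)=a_\ell$ for all $\ell$. Therefore $X\in S_{\mathbf a}$.

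\textbf{Inclusion $\subseteq$.} Take $X\in S_{\mathbf a}$ and set $b_m:=C^{j_m}(x)$. Then $X$ itself witnesses ${\mathbf b}=(b_1,\dots,b_{r-p})\in K_{\mathbf a}$, and $X$ lies in the common level set of all Casimirs with values $({\mathbf a},{\mathbf b})$, i.e. in $O_{{\mathbf a},{\mathbf b}}$.

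The one point needing care is that the level set $\{C^n(x)=c_n,\ n=1,\dots,r\}$ is exactly one orbit, so that $O_{{\mathbf a},{\mathbf b}}$, defined in the paper as a level set, really is the coadjoint orbit through $X$. This is the cited theorem. It contains orbits because the $C^n$ are invariant. Conversely, any two points with equal values of all $C^n$ have equal values of every invariant polynomial, since the $C^n$ generate the invariant algebra. Each orbit meets the closed Weyl chamber $\mathfrak{h}_+$ in exactly one point, and invariant polynomials restricted to $\mathfrak{h}$ are the $W$-invariant polynomials (Chevalley restriction), which separate $W$-orbits in $\mathfrak{h}$. Hence the two points lie in the same orbit.

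\textbf{Disjointness.} If ${\mathbf b}\neq{\mathbf b}'$, say $b_m\neq b'_m$, then $C^{j_m}$ takes the value $b_m$ on $O_{{\mathbf a},{\mathbf b}}$ and $b'_m$ on $O_{{\mathbf a},{\mathbf b}'}$, so the two sets are disjoint. Also, ${\mathbf b}\mapsto O_{{\mathbf a},{\mathbf b}}$ is injective, and $O_{{\mathbf a},{\mathbf b}}\neq\emptyset$ exactly when ${\mathbf b}\in K_{\mathbf a}$, so the union is indexed without redundancy.

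I expect the only nontrivial step to be the identification of the full Casimir level set with a single orbit. I will simply invoke \cite[Theorem~4.37(ii)]{Alexandrino_Bettiol} rather than reprove Chevalley restriction; the rest is definition-chasing.
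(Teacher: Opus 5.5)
Your proposal is correct and follows essentially the paper's own argument. The paper likewise observes that every point of $S_{\mathbf a}$ lies in a coadjoint orbit, which by the orbit-separation result quoted in the introduction (Alexandrino--Bettiol, Theorem~4.37(ii)) is exactly a full Casimir level set, so letting ${\mathbf b}$ range over $K_{\mathbf a}$ reconstructs $S_{\mathbf a}$ as a disjoint union; you simply make explicit what the paper leaves implicit, namely the double inclusion, the disjointness, and why a full level set is a single orbit (Weyl chamber plus Chevalley restriction).
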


As a simple special case in which ${\mathbf a}$ has only one component,
namely ${\mathbf a}=(\lambda_k)$, we obtain the following result.

\begin{lemma}\label{lem5_1}
Assume the same conditions and notations as in Lemma~\ref{lem5_0}.
Consider the algebraic variety defined by
$
S_k
:=
\left\{
x\in{\mathbb R}^d
\ \middle|\
C^k(x)-\lambda_k=0
\right\}.
$
Let $K_k$ denote the set of all realizable values of the complete set of
Casimir polynomials subject to the condition that the $k$-th value is
$\lambda_k$:
\begin{align}\label{k_k}
K_k
:=
\left\{
{\mathbf c}=(c_1,\ldots,c_r)\in{\mathbb R}^r
\ \middle|\
\begin{array}{l}
\text{there exists }x\in{\mathbb R}^d\text{ such that}\\
C^n(x)=c_n
\quad
(n=1,\ldots,r),\
c_k=\lambda_k
\end{array}
\right\}.
\end{align}
For each ${\mathbf c}\in K_k$, let $O_{\mathbf c}$ denote the coadjoint
orbit determined by the complete set of Casimir values ${\mathbf c}$.
Then
\begin{align}
S_k
=
\coprod_{{\mathbf c}\in K_k}O_{\mathbf c}.
\end{align}
\end{lemma}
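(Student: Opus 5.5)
Lemma~\ref{lem5_1} is the special case $p=1$, $I_{\mathbf a}=\{k\}$, ${\mathbf a}=(\lambda_k)$ of Lemma~\ref{lem5_0}. The plan is to check the set equality directly by proving both inclusions and then pairwise disjointness. The only external input is the fact recalled in the introduction: the values of the algebraically independent homogeneous generators $C^1,\ldots,C^r$ determine each adjoint orbit uniquely \cite[Theorem~4.37(ii)]{Alexandrino_Bettiol}. This is combined with the identification $\mathfrak{g}\simeq\mathfrak{g}^*$ by an invariant inner product, under which adjoint and coadjoint orbits coincide.

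\emph{Inclusion $S_k\subset\coprod_{\mathbf c\in K_k}O_{\mathbf c}$.} Take $x\in S_k$ and set $\mathbf c:=(C^1(x),\ldots,C^r(x))$. Then $c_k=\lambda_k$ and $x$ witnesses the existence condition in (\ref{k_k}), so $\mathbf c\in K_k$. Since the $C^n$ are $G$-invariant, $\mathbf c$ is constant on the orbit $G\cdot x$. By the cited theorem, $G\cdot x$ is exactly the orbit singled out by $\mathbf c$, so $G\cdot x=O_{\mathbf c}$. Hence $x\in O_{\mathbf c}$.

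\emph{Inclusion $\coprod_{\mathbf c\in K_k}O_{\mathbf c}\subset S_k$.} Let $\mathbf c\in K_k$ and $y\in O_{\mathbf c}$. By the definition of $O_{\mathbf c}$ as a common level set, $C^k(y)=c_k=\lambda_k$, so $y\in S_k$. Here it is worth noting that $O_{\mathbf c}$ is nonempty: the existence clause in $K_k$ provides some $x$ with these values. The uniqueness theorem then says the common level set is a single orbit, namely $G\cdot x$, rather than a union of several orbits. This is what justifies calling $O_{\mathbf c}$ \emph{the} coadjoint orbit.

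\emph{Disjointness.} If $\mathbf c\neq\mathbf c'$, then some $C^n$ takes different values on $O_{\mathbf c}$ and on $O_{\mathbf c'}$, so the two sets cannot meet. The general Lemma~\ref{lem5_0} follows by the same argument, with $\mathbf c$ split as $(\mathbf a,\mathbf b)$.

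The only step that is not purely formal is that a common level set of all $r$ invariants is a single orbit rather than a finite union of orbits. This is not automatic, since polynomial invariants could in principle fail to separate orbits over $\mathbb{R}$. I would justify it as follows. Every orbit meets the closed dominant chamber $\mathfrak h_+$ in exactly one point. Moreover, the restriction of $(C^1,\ldots,C^r)$ to $\mathfrak h$ generates the Weyl-invariant polynomials (Chevalley restriction), and these separate $W$-orbits because $W$ is finite. This is precisely the content of the cited theorem, so in the proof I would simply invoke it.
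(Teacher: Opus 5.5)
Your proof is correct and follows the paper's route. The paper obtains Lemma~\ref{lem5_1} as the one-component case of Lemma~\ref{lem5_0}, justified by two facts: every point of the level set lies in a coadjoint orbit, and the full tuple of Casimir values determines that orbit uniquely (the cited Theorem~4.37(ii) of Alexandrino--Bettiol). Your two inclusions, the disjointness check, and the sketch via Chevalley restriction and separation of Weyl-group orbits spell out in detail what the paper states in a single sentence.
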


Using Lemma \ref{lem5_1}, we obtain the following theorem.

\begin{theorem}
Let $\mathfrak{g}$ be a $d$-dimensional compact semisimple Lie algebra. Let 
Casimir polynomials 
\begin{align}
C^n(x),
\qquad
n=1,2,\ldots,r=\operatorname{rank}\mathfrak{g},
\end{align}
with $\deg C^n=d_n$
be algebraically independent generators of the algebra of invariant polynomials on $\mathfrak{g}$.

For a fixed $k$ and non-zero $\lambda_k \in \mathbb{R}$, consider the algebraic variety
\begin{align}
S_k
:=
\left\{
x\in{\mathbb R}^d
\ \middle|\
C^k(x)-\lambda_k=0
\right\}
\end{align}
and the Lie-Poisson algebra $A_{\mathfrak{g}}/I(C)$ defined on $S_k$.
Here, we assume that $\lambda_k$ is an admissible value of $C^k$, so that
$S_k\neq\emptyset$.
Then there exists a sequence of reducible representations
$\{V^{\mu_j}\}_{j=1}^{\infty}$ and a weak matrix regularization
\begin{align}
q^R_{A_{\mathfrak{g}}/I,\mu_j}
:
A_{\mathfrak{g}}/I(C)
\rightarrow
\operatorname{End}(V^{\mu_j}),
\qquad
j=1,2,\ldots,
\end{align}
whose commutative limit is $S_k$  in the sense of weak matrix regularization.
\end{theorem}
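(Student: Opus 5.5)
We will generalize the $S^7$ construction of Section~\ref{sect5}. By Lemma~\ref{lem5_1}, $S_k=\coprod_{{\mathbf c}\in K_k}O_{\mathbf c}$. So it suffices to produce, for each $j$, a finite family of dominant weights $\mu_j^a$ ($a=1,\dots,m_j$) with the following three properties. First, every $\dim V^{\mu_j^a}\to\infty$. Second, each irreducible block, rescaled by $\hbar_a$ with $\hbar_a^{d_k}C_k(\mu_j^a)=\lambda_k$, converges in the sense of relations to some orbit $O_{\mathbf c}$ with ${\mathbf c}\in K_k$. Third, these limiting orbits fill $S_k$ densely. Then we set $V^{\mu_j}:=\bigoplus_a V^{\mu_j^a}$ and $q^R_{A_{\mathfrak g}/I,\mu_j}:=\bigoplus_a q^a_{A/I,\mu_j}$ as in Definition~\ref{def_2_10}. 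The final theorem of Subsection~\ref{rev_sect6.1} then gives that this is a weak matrix regularization, provided $\max_a|\hbar_a(\mu_j)|\to0$.

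\textbf{Step 1 (directions).} We reduce to the closed dominant chamber. Each point $x\in S_k$ is conjugate to a unique $h\in\mathfrak h_+$, and $C^k(h)=\lambda_k$. By homogeneity, $C^k(h/\|h\|)=\lambda_k\|h\|^{-d_k}$, so the unit direction $u=h/\|h\|\in\mathfrak h_+$ satisfies $P^{[d_k]}_{C_k}(u)\neq0$ and has the sign of $\lambda_k$. Here we identify $C^k|_{\mathfrak h}$ with the top homogeneous part of the Harish-Chandra polynomial, up to a fixed normalization. Conversely, any unit $u\in\mathfrak h_+$ with $\lambda_k/P^{[d_k]}_{C_k}(u)>0$ determines the unique point $\alpha u\in S_k\cap\mathfrak h_+$, where $\alpha^{d_k}P^{[d_k]}_{C_k}(u)=\lambda_k$. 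Call the set of such $u$ the admissible directions $U$; it is relatively open in the unit sphere of $\mathfrak h_+$. The orbit attached to $u$ is $O_{\mathbf c(u)}$ with $c_n(u)=\alpha^{d_n}P^{[d_n]}_{C_n}(u)$, and $u\mapsto\mathbf c(u)$ is continuous.

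\textbf{Step 2 (irreducible blocks).} By Proposition~\ref{Syusoku1}, any sequence $\mu_j$ with $\|\mu_j\|\to\infty$ and $\mu_j/\|\mu_j\|\to u\in U$ gives $\hbar(\mu_j)^{d_n}C_n(\mu_j)\to c_n(u)$ for all $n$. The kernel elements $C^n(x)-\hbar^{d_n}C_n(\mu_j)$ of $q_{\mu_j}$ therefore converge to $C^n(x)-c_n(u)$. By the argument in Section~\ref{sect5} (via $\phi_\mu$ and multiplicativity modulo $\hbar$), the block converges in the sense of relations to $O_{\mathbf c(u)}$. The same holds for $q_{A/I}$, because the remainder map ${\mathcal R}_G$ changes images only by $\tilde O(\hbar)$, as in the $C^3$ computation.

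\textbf{Step 3 (lattice choice).} We take the finite set $\Lambda_j$ of dominant integral weights $\mu$ with $j\le\|\mu\|\le 2j$ and $\mu/\|\mu\|$ at distance at least $\delta_j$ from $\partial U$, where $\delta_j\to0$ slowly, say $\delta_j=j^{-1/2}$. The dominant lattice directions in this shell form a $O(1/j)$-net of the chamber sphere, generalizing the $\arctan$ gap estimate for $p+q=k$. Hence the directions from $\Lambda_j$ become dense in $U$. On $\Lambda_j$, $|P^{[d_k]}_{C_k}|$ is bounded below by a positive constant times a power of $\delta_j$, so $\hbar_a\sim\alpha/\|\mu\|$ uniformly. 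This gives $\max_a\hbar_a\to0$, and the Weyl dimension formula gives $\dim V^\mu\ge$ const$\cdot\|\mu\|\to\infty$. Density of the limiting orbits then follows: given $x\in S_k$ with direction $u\in U$, we pick $\mu_j^{a}\in\Lambda_j$ with direction tending to $u$, and continuity of $\mathbf c$ and of the parametrization $u\mapsto\alpha u$ in $\mathfrak h_+$ puts orbits near $x$ after conjugation.

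\textbf{Main obstacle.} The hardest step is the uniformity in Step 3 near directions where $P^{[d_k]}_{C_k}$ vanishes, i.e. near $\partial U$. There $\alpha$ blows up and $\hbar$ is not uniformly controlled. This also matters because $\lambda_k$ may force $U$ to exclude walls (e.g.\ $d_k$ odd). I would handle it with the shrinking margin $\delta_j$ above, trading a slower rate of $\hbar\to0$ for density in the interior of $U$. Since $U$ is open and every point of $S_k$ has its direction in $U$, this suffices. I would also check that the subleading $\rho$-shift terms in $C_n(\mu+\rho)$ remain $o(\|\mu\|^{d_n})$ uniformly on the shell.
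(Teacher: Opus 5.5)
Your proof is correct in substance, but it is organized differently from the paper's. The paper does not take all dominant weights in a shell.

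In step ii) the paper works orbit by orbit. For a fixed $\mathbf c\in K_k$ it picks $h_{\mathbf c}\in\mathfrak h_+^*$ and a sequence with $(\mu_j+\rho)/t_j\to h_{\mathbf c}$. This gives $\hbar(\mu_j)^{d_i}C_i(\mu_j)\to c_i$, which corresponds to your Steps 1--2 resting on Proposition~\ref{Syusoku1}.

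In step iii) it fixes a countable dense set $\{x_\ell\}\subset S_k$ with orbits $\mathcal O_\ell$. At stage $j$ it takes, for each of the first $m_j$ orbits, a member of that orbit's own sequence far enough out that $\hbar_a<1/j$ and $\dim V^a_{\mu_j}>j$. Each stage therefore involves only finitely many pre-chosen sequences, so no uniform estimate is ever needed. The obstacle you isolate near $\partial U$ simply does not arise, and each slot $a$ converges to a fixed orbit $\mathcal O_a$ without any subsequence extraction.

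Your shell construction instead generalizes the explicit $p+q=k$ construction of fuzzy $S^7$. It is more canonical and gives a stronger conclusion: every orbit in $S_k$, not just a countable dense family, is the limit of a sequence of blocks. The price is the uniform control in Step 3, and convergence holds only along sequences of blocks extracted from the shells.

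In Step 3 one inference needs repair. The $\rho$-shift gives $\|\mu\|^{-d_k}C_k(\mu)=P^{[d_k]}_{C_k}(\mu/\|\mu\|)+O(1/j)$ uniformly on the shell. You therefore need $j\cdot\min_{\Lambda_j}|P^{[d_k]}_{C_k}(\mu/\|\mu\|)|\to\infty$, both for $\lambda_kC_k(\mu)>0$ (so that $\hbar_a$ is real) and for $\hbar_a\|\mu\|/\alpha\to1$ uniformly. A lower bound $c\,\delta_j^N$ with $\delta_j=j^{-1/2}$ gives only $c\,j^{-N/2}$, which does not dominate $1/j$ when the vanishing order satisfies $N\ge2$.

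Instead, put $m(\delta):=\min\{|P^{[d_k]}_{C_k}(u)| : u\in U,\ \operatorname{dist}(u,\partial U)\ge\delta\}$. This is positive by compactness, since that set is closed in the chamber sphere and $P^{[d_k]}_{C_k}$ does not vanish on $U$. Then let $\delta_j\to0$ slowly enough that $j\,m(\delta_j)\to\infty$; no power-law lower bound is needed. You should also require $\mu/\|\mu\|\in U$ explicitly in the definition of $\Lambda_j$, not merely distance at least $\delta_j$ from $\partial U$. With these adjustments the rest of Step 3 goes through.
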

\begin{proof}
i) We first consider an irreducible representation $V^\mu$. We choose
$\hbar(\mu)$ so that
\begin{align}
q_\mu(C^k(x))
=
\lambda_k\operatorname{Id}_{\mu},
\end{align}
and take $\dim V^\mu$ sufficiently large.

Let $r_{C^i}(x)$ be the remainder obtained by dividing $C^i(x)$ by the
Gr\"{o}bner basis $\{C^k(x)-\lambda_k\}$. Thus, there exists a polynomial
$g_i(x)$ such that
\begin{align}
C^i(x)
=
g_i(x)\bigl(C^k(x)-\lambda_k\bigr)
+
r_{C^i}(x).
\end{align}
Let $C_i(\mu)$ denote the eigenvalue of the Casimir operator corresponding
to $C^i(x)$ on $V^\mu$. Then
\begin{align}
q_\mu(C^i(x))
=
\bigl(\hbar(\mu)\bigr)^{d_i}
C_i(\mu)\operatorname{Id}_{\mu}.
\end{align}
Therefore,
\begin{align}
q_{A_{\mathfrak{g}}/I,\mu}
\bigl([C^i(x)]\bigr)
&=
q_\mu\bigl(r_{C^i}(x)\bigr)
\notag\\
&=
\bigl(\hbar(\mu)\bigr)^{d_i}
C_i(\mu)\operatorname{Id}_{\mu}
-
q_\mu\left(
g_i(x)\bigl(C^k(x)-\lambda_k\bigr)
\right).
\end{align}
Using (\ref{asym_hom_mu}), we have
\begin{align}
q_\mu\left(
g_i(x)\bigl(C^k(x)-\lambda_k\bigr)
\right)
&=
q_\mu(g_i(x))
q_\mu\bigl(C^k(x)-\lambda_k\bigr)
+
\hbar(\mu)P,
\end{align}
where $P = \sum_i^D c_i(\hbar(\mu)) E_i  \in T_\mu$. Since
\begin{align}
q_\mu\bigl(C^k(x)-\lambda_k\bigr)=0,
\end{align}
it follows that
\begin{align}
q_{A_{\mathfrak{g}}/I(C^k),\mu}
\bigl([C^i(x)]\bigr)
=
\bigl(\hbar(\mu)\bigr)^{d_i}
C_i(\mu)\operatorname{Id}_{\mu}
+
\hbar(\mu) P.
\end{align}
The second term vanishes in the classical limit.

ii) We now show that all possible coadjoint orbits arise in this way. 



Let
$\mathfrak{h}_+^*$ be a closed dominant Weyl chamber. By the Chevalley
restriction theorem,
\begin{align}
\left\{
\bigl(C^1(x),\ldots,C^r(x)\bigr)
\ \middle|\
x= x_i e_i \in\mathfrak{g}
\right\}
=
\left\{
\bigl(C^1(h),\ldots,C^r(h)\bigr)
\ \middle|\
h\in\mathfrak{h}_+^*
\right\}.
\end{align}
On the other hand, by the Harish--Chandra isomorphism, the Casimir
eigenvalues are expressed as
$
C_i(\mu)=P_{C_i}(\mu+\rho).
$
Fix an arbitrary $\mathbf{c}\in K_k$ in (\ref{k_k}). 
There exists $h_{\mathbf{c}}\in\mathfrak{h}_+^*$ such that 
$
C^i(h_{\mathbf{c}})=c_i, \  (i=1,\ldots,r ). 
$
In particular, $ {C}^k(h_{\mathbf{c}})=\lambda_k. $ 
Here, we apply a similar argument as in Proposition \ref{Syusoku1}.
Since the dominant integral highest weights form a lattice in $\mathfrak{h}_+^*$, 
one can choose a sequence of integral highest weights $\{\mu_j\}$ and a sequence of positive real numbers 
$t_j\to\infty$ such that 
\begin{align*} 
\frac{\mu_j+\rho}{t_j} \rightarrow h_{\mathbf{c}}. 
\end{align*} 
Remember that $ (\hbar(\mu_j))^{d_k}C_k(\mu_j)=\lambda_k $, i.e.
$\displaystyle  \hbar(\mu_j) = \left( \frac{\lambda_k}{C_k(\mu_j)} \right)^{1/d_k}$.
Moreover, since 
$\displaystyle \lim_{j \rightarrow \infty}
t_j^{-d_k}C_k(\mu_j) = {C}^k(h_{\mathbf{c}}) = \lambda_k, 
$
we have 
$\displaystyle \lim_{j \rightarrow \infty}
\hbar(\mu_j) t_j = 
\lim_{j \rightarrow \infty} \left( \frac{\lambda_k} {t_j^{-d_k}C_k(\mu_j)} \right)^{1/d_k} 
= 1. 
$
Consequently, 
\begin{align*} 
\lim_{j \rightarrow \infty}\hbar (\mu_j) (\mu_j+\rho) = 
\lim_{j \rightarrow \infty}(\hbar(\mu_j) t_j) \frac{\mu_j+\rho}{t_j} 
= h_{\mathbf{c}}. 
\end{align*} 
Therefore, for every $i=1,\ldots,r$, 
\begin{align*} 
(\hbar(\mu_j))^{d_i}C_i(\mu_j) &= (\hbar_jt_j)^{d_i} t_j^{-d_i}P_{C_i}(\mu_j+\rho) 
\rightarrow {C}^i(h_{\mathbf{c}}) = c_i. 
\end{align*} 
Thus, for every admissible tuple 
$\mathbf{c}=(c_1,\ldots,c_r)\in K_k$, there exist sequences 
$\{\mu_j\}$ and $\{\hbar (\mu_j)\}$ such that 
\begin{align} 
(\hbar(\mu_j))^{d_k}C_k(\mu_j) &= \lambda_k, \\ \lim_{j\to\infty} (\hbar(\mu_j))^{d_i}C_i(\mu_j) &= c_i, \qquad i=1,\ldots,r. 
\end{align} 
Hence, every coadjoint orbit contained in $S_k$ is obtained as the classical limit, in the sense of relations, of a suitable sequence of irreducible matrix regularizations.

iii) Since $S_k$ is a subset of a finite-dimensional Euclidean space,
choose a countable dense subset
$
\{x_\ell\}_{\ell=1}^{\infty}\subset S_k,
$
and let $\mathcal{O}_\ell$ be the coadjoint orbit through $x_\ell$.
By part ii), for each $\ell$ there exists a sequence of irreducible
quantizations converging to $\mathcal{O}_\ell$ in the sense of relations.

Choose a sequence of positive integers $\{m_j\}$ such that
$m_j\to\infty$. For each $j$ and each $a=1,\ldots,m_j$, choose a
sufficiently large member of the sequence corresponding to
$\mathcal{O}_a$, and denote it by
$
q^a_{A_{\mathfrak g}/I,\mu_j}
$
with representation space $V_{\mu_j}^a$, so that
\begin{align}
\max_{1\leq a\leq m_j}
\left|\hbar_a(\mu_j)\right|
<
\frac{1}{j},
\qquad
\min_{1\leq a\leq m_j}
\dim V_{\mu_j}^a
>
j.
\end{align}
Since $x_a\in\mathcal{O}_a$ and
$\{x_\ell\}_{\ell=1}^{\infty}$ is dense in $S_k$, the finite unions
$\displaystyle
\bigcup_{a=1}^{m_j}\mathcal{O}_a
$
fill $S_k$ densely as $j\to\infty$.
Let
\begin{align}
V^{\mu_j}
:=
\bigoplus_{a=1}^{m_j}V_{\mu_j}^a,
\qquad
q^R_{A_{\mathfrak g}/I,\mu_j}
:=
\bigoplus_{a=1}^{m_j}
q^a_{A_{\mathfrak g}/I,\mu_j}.
\end{align}
The resulting sequence is a weak matrix regularization whose
commutative limit is $S_k$ in the sense of weak matrix regularization.
\end{proof}





\section{Summary}\label{sect7}


In this paper, we investigated the classical limits of matrix regularizations of algebraic varieties defined by fixing one Casimir polynomial of a compact semisimple Lie algebra. 
We showed that, for every coadjoint orbit, there exists a
matrix regularization constructed from a sequence of irreducible
representations that converges to that orbit in the sense of relations.
Using this result, we further showed that there exist matrix regularizations constructed from sequences of reducible representations whose classical limits, in the sense of weak matrix regularization, are the algebraic varieties defined by the Casimir constraint imposed on the domain.


In the present setting, convergence to an algebraic variety in the sense of weak matrix regularization means the following. 
The reducible representations decompose into direct sums of irreducible
representations. 
Each sequence of irreducible components gives rise to
a matrix regularization that converges, in the sense of relations, 
to a coadjoint orbit contained in the algebraic variety defining the
domain.
 These coadjoint orbits are chosen so that they fill the algebraic variety densely. 
 In this sense, we constructed weak matrix regularizations of algebraic varieties defined by Casimir polynomials.

As a concrete example, we constructed fuzzy $S^7$. 
The same construction based on a quadratic Casimir polynomial yields fuzzy $S^{n^2-2}$ from a Lie algebra of type $A_{n-1}$, fuzzy $S^{n(2n+1)-1}$ 
from one of type $B_n$, and analogous fuzzy spheres from the other compact simple Lie algebras. 
More generally, by choosing a positive-definite quadratic Casimir polynomial, 
the construction can be applied to any compact semisimple Lie algebra. Algebraic varieties defined by Casimir polynomials of degree greater than two can also be treated by the general method developed in Section \ref{sect6}.\\

An important direction for future work is a more precise analysis of
convergence in the sense of relations. 
In particular, it is important
to determine under what conditions an asymptotic algebra homomorphism
$\phi_\mu$ can also become asymptotically surjective with respect to the PBW
filtration. For suitable sequences, when the matrix and polynomial
basis elements are organized according to their PBW degrees, the
maximal PBW degree up to which $\phi_\mu$ gives a bijective
correspondence may increase as $\dim V^\mu$ increases. 
Developing a
general approach to this phenomenon remains an open problem.

Another important direction concerns connections with physics. In order to understand classical solutions of matrix models, it is necessary to investigate how the present framework should be extended to noncompact Lie algebras and noncompact spaces. This problem has not yet been studied within the present framework. Moreover, although our method provides a systematic way to construct infinitely many fuzzy spaces, only fuzzy $S^7$ has so far been worked out explicitly, and no concrete applications of these constructions have yet been developed. Further investigation of their mathematical and physical applications is therefore left for future work.


%
\section*{Acknowledgements}
\noindent 
The author was supported by JSPS KAKENHI Grant Number 26K06814.
The author also thanks the participants in the workshop
``Discrete Approaches to the Dynamics of Fields and Space-Time" 
for their useful comments.
In this paper, AI is used for Japanese-to-English translation, as well as for checking for typos, misprints, and errors.\\



\noindent
{\bf Data availability} \   Data sharing is not applicable to this article as no new data were 
created or analyzed in this study.

\section*{Declarations}

{\bf Conflicts of interest} \ 
The author declares that there is no conflict of interest.

\appendix

\section{Reducible representations and geometries} \label{App_red_rep}
We comment on the algebraic variety associated with a reducible representation, which was mentioned in this paper to give a quantization.\\

We first comment from the viewpoint of quantization, that is, from the viewpoint of assigning a noncommutative algebra to a classical algebraic variety. We construct the quantization as a map from a single commutative algebra, namely the coordinate algebra of a single algebraic variety, to a direct product of algebras associated with a reducible representation of the Lie algebra.
Suppose that the representation space is decomposed into the direct sum of irreducible representation spaces as
$V^\mu =\bigoplus_{a=1}^{m_\mu} V_\mu ^a$.
If the quantization maps defined by the irreducible representations are denoted by $q_\mu^a$, then the quantization map associated with the reducible representation is given by
$q^R_\mu:=q^1_\mu\oplus \cdots \oplus q^{m_\mu}_\mu$.
Each irreducible representation appearing in the construction of the direct product algebra can be interpreted as corresponding to a coadjoint orbit, which is a subvariety of the Euclidean space ${\mathbb R}^d$ equipped with the structure of a Lie-Poisson algebra. 
Therefore, choosing suitable sequence of irreducible representations, the image of each map $q_\mu^a$ gives the matrix algebra corresponding to the quantization of the associated coadjoint orbit.\\

Conversely, we comment from the viewpoint of the classical limit on how the noncommutative algebra generated by a reducible representation yields a commutative algebra in the commutative limit.
For a direct product of commutative rings, we have
\begin{align}
Spec(A \times B) = Spec A \coprod Spec B .
\end{align}
In the reducible representation considered in our previous paper \cite{Gohara_Sako}, we observed that the image of
$q^R_\mu:=q^1_\mu\oplus \cdots \oplus q^{m_\mu}_\mu$ generates the algebra $T_\mu:=T^1_\mu\oplus \cdots \oplus T^{m_\mu}_\mu$.
At least in the finite-dimensional case, the direct sum and the direct product can be identified. Therefore, the commutative limit can be interpreted as the disjoint union of the varieties corresponding to the commutative limits of the individual algebras.
Through this correspondence, we see in this paper that, by taking a suitable family of reducible representations, the commutative limit recovers the original algebraic variety before quantization, in the sense that the corresponding components become dense in it.


\section{Notation for ${su}(3)$}
\label{Ap_su(3)}

In this appendix, we summarize the notation for ${su}(3)$ used throughout this paper.
In the fundamental representation, the generators $T_i$ of ${su}(3)$ are given by
\begin{align*}
T_i=\frac{1}{2}\lambda_i,
\end{align*}
where $\lambda_i$ are the Gell-Mann matrices,
\begin{align*}
\lambda_1 &= \begin{pmatrix}
0 & 1 & 0 \\
1 & 0 & 0 \\
0 & 0 & 0
\end{pmatrix}, \quad
\lambda_2 = \begin{pmatrix}
0 & -i & 0 \\
i & 0 & 0 \\
0 & 0 & 0
\end{pmatrix}, \quad
\lambda_3 = \begin{pmatrix}
1 & 0 & 0 \\
0 & -1 & 0 \\
0 & 0 & 0
\end{pmatrix},\\
\lambda_4 &= \begin{pmatrix}
0 & 0 & 1 \\
0 & 0 & 0 \\
1 & 0 & 0
\end{pmatrix}, \quad
\lambda_5 = \begin{pmatrix}
0 & 0 & -i \\
0 & 0 & 0 \\
i & 0 & 0
\end{pmatrix}, \quad
\lambda_6 = \begin{pmatrix}
0 & 0 & 0 \\
0 & 0 & 1 \\
0 & 1 & 0
\end{pmatrix},\\
\lambda_7 &= \begin{pmatrix}
0 & 0 & 0 \\
0 & 0 & -i \\
0 & i & 0
\end{pmatrix}, \quad
\lambda_8 = \frac{1}{\sqrt{3}} \begin{pmatrix}
1 & 0 & 0 \\
0 & 1 & 0 \\
0 & 0 & -2
\end{pmatrix}.
\end{align*}
With respect to this basis, the structure constants are defined by
\begin{align}
i f_{ab}{}^c
=
2\,\mathrm{tr}\bigl([T_a,T_b]T_c\bigr).
\end{align}
Here, since the basis is chosen to consist of Hermitian matrices, as is customary in physics, the structure constants differ by a factor of $i=\sqrt{-1}$.
We use this convention for the structure constants throughout what follows.
The corresponding Lie-Poisson bracket is given by
\begin{align*}
\{x_a,x_b\}
=
f_{ab}{}^c x_c,
\end{align*}
where $x_a$, $a=1,\ldots,8$, are the coordinate functions on
${su}(3)^*$.

Setting
$
X=x_iT_i,
$
we obtain the quadratic Casimir polynomial
\begin{align}
C^2(x)
:=
2\,\mathrm{tr}(X^2)
=
\delta^{ab}x_ax_b.
\label{su(3)_C2}
\end{align}
A cubic Casimir polynomial is given by
\begin{align}
C^3(x)
:=
4\,\mathrm{tr}(X^3)
=&
-\frac{1}{6}
\bigl(
2\sqrt{3}\,x_8^3
-6\sqrt{3}\,x_1^2x_8
-6\sqrt{3}\,x_2^2x_8
-6\sqrt{3}\,x_3^2x_8
+3\sqrt{3}\,x_4^2x_8
\notag\\
&\quad
+3\sqrt{3}\,x_5^2x_8
+3\sqrt{3}\,x_6^2x_8
+3\sqrt{3}\,x_7^2x_8
-18x_2x_5x_6
+18x_2x_4x_7
\notag\\
&\quad
-18x_1(x_4x_6+x_5x_7)
-9x_3\left(x_4^2+x_5^2-x_6^2-x_7^2\right)
\bigr).
\label{su(3)_C3}
\end{align}
The overall normalization of each Casimir polynomial is conventional and is therefore not unique.

The algebraic sets corresponding to quotient algebras of the form
$A_{{su}(3)}/I(C)$ include $\mathbb{R}^8$ itself, the level sets
$
C^2(x)=c_2
$
in $\mathbb{R}^8$, which are spheres $S^7$ when $c_2>0$, and the
level sets
$
C^3(x)=c_3
$
in $\mathbb{R}^8$.\\

Since ${su}(3)$ has rank two, its finite-dimensional irreducible representations are specified by Dynkin labels $(p,q)$, where
$p,q\in\mathbb{Z}_{\geq 0}$. Denoting the fundamental weights by
$\omega_1$ and $\omega_2$, the corresponding highest weight is written as
\begin{align*}
\mu=p\omega_1+q\omega_2.
\end{align*}
By the Weyl dimension formula, the dimension of the representation space
$V^{(p,q)}$ is
\begin{align}
\dim V^{(p,q)}
=
\frac{1}{2}(p+1)(q+1)(p+q+2).
\label{su3_dim}
\end{align}

Let $X_i$, $i=1,\ldots,8$, denote the representation matrices of an orthogonal basis, normalized so that the eigenvalue of the quadratic Casimir operator is
\begin{align}
C_2(p,q)
=
\frac{1}{3}(p^2+q^2+pq)+(p+q).
\label{su3_2nd_Casimir}
\end{align}
Thus,
\begin{align}
\sum_i X_iX_i
=
C_2(p,q)\operatorname{Id}_{V^{(p,q)}}.
\end{align}
The overall coefficient in this formula depends on the normalization of the generators.

With the same normalization, the eigenvalue of the cubic Casimir operator in the representation $(p,q)$ is
\begin{align}
C_3(p,q)
=
\frac{1}{18}(p-q)(2p+q+3)(p+2q+3).
\label{su3_3rd_Casimir}
\end{align}
Again, the overall coefficient depends on the normalization and is not essential to the arguments below.



\end{document}